\documentclass[10pt,english,journal]{IEEEtran}

% ---------- Packages ----------
\usepackage{amsmath,amssymb,amsfonts,mathtools}
\usepackage{amsthm}
\usepackage{bm}
\usepackage{booktabs}
\usepackage{algorithm}
\usepackage[noend]{algpseudocode}
\usepackage{graphicx}
\usepackage{microtype}
\usepackage[hidelinks]{hyperref}
\usepackage[nameinlink]{cleveref}
\usepackage{url}
\usepackage{balance}
\usepackage{cite}
\usepackage{enumitem}
\usepackage{siunitx}
\usepackage{verbatim}
\usepackage{caption}
\captionsetup[table]{justification=centering}

% ---------- Math Macros ----------
\newcommand{\R}{\mathbb{R}}
\newcommand{\E}{\mathbb{E}}

\newcommand{\norm}[1]{\left\lVert #1 \right\rVert}
\newcommand{\abs}[1]{\left\lvert #1 \right\rvert}
\newcommand{\ip}[2]{\left\langle #1,#2 \right\rangle}
\DeclareMathOperator{\diag}{diag}
\DeclareMathOperator{\Tr}{Tr}
\DeclareMathOperator{\Var}{Var}

\newcommand{\bigO}{\mathcal{O}}
% ----- Additional macros (non\mbox{-}conflicting with above) -----
\newcommand{\Cov}{\mathrm{Cov}}

% ---------- Theorem Styles ----------
\theoremstyle{plain}
\newtheorem{theorem}{Theorem}

\newtheorem{lemma}{Lemma}

\theoremstyle{definition}

\theoremstyle{remark}

\usepackage{microtype}
\usepackage{tabularx}   % for one-column-wide tables
\usepackage{array}      % for >{\raggedright\arraybackslash} in tables
% Notation macros (kept minimal and standard)
\allowdisplaybreaks % allow multi-line displays to break across pages if needed
\newcommand{\PP}{\mathbb{P}}
\newcommand{\cF}{\mathcal{F}}

\newcommand{\X}{\mathbf{X}}
\newcommand{\bx}{\mathbf{x}}
\newcommand{\by}{\mathbf{y}}
\newcommand{\bs}{\mathbf{s}}

\newcommand{\ct}{\mathrm{ct}}

\newtheorem{proposition}[theorem]{Proposition} % shares counter with Theorem

\theoremstyle{definition}
\theoremstyle{remark}
\usepackage{adjustbox}
\usepackage{comment}

\usepackage{hyphenat}
% ==============================================================

% ---------- Meta ----------
%\title{Transformer-Free: Leakage-Safe Lagged-Exogenous State-Space for Multi-KPI O-RAN Telemetry}

\title{Rivaling Transformers: Multi-Scale Structured State-Space Mixtures for Agentic 6G O-RAN}

\author{
Farhad~Rezazadeh,~\IEEEmembership{Member,~IEEE}, Hatim~Chergui,~\IEEEmembership{Senior~Member,~IEEE},
Merouane~Debbah,~\IEEEmembership{Fellow,~IEEE},
Houbing~Song,~\IEEEmembership{Fellow,~IEEE},~and~Lingjia~Liu,~\IEEEmembership{Fellow,~IEEE}

%\IEEEcompsocitemizethanks{
%Copyright (c) 2015 IEEE. Personal use of 
%this material is permitted. However, permission to use this material for any other purposes must be 
%obtained from the IEEE by sending a request to pubs-permissions@ieee.org.}
%\IEEEcompsocitemizethanks{Manuscript received 28 May 2022; revised 23 September 2022; accepted 24 October 2022. This work was supported by the Spanish Government (initially by MICCIN and since November 2021 by the Next Generation EU program) under Grant PCI2020-112049 and by the Electronic Components and Systems for European Leadership Joint Undertaking (JU) under grant agreement No. 876868. This JU receives support from the EU`s H2020 research and innovation programme and Germany, Slovakia, Netherlands, Spain, Italy, in part by the EU H2020 projects MonB5G under Grant 871780, in part by 5GMediaHUB under Grant 101016714, and in part by OPTIMIST project under Grant 872866. The review of this article was coordinated by Dr.Tao Dusit Niyato. (Corresponding author: Farhad Rezazadeh.)
%}
\IEEEcompsocitemizethanks{\IEEEcompsocthanksitem F. Rezazadeh is with the Hostelworld Group and the Technical University of Catalonia (UPC), 08860 Castelldefels, Spain (e-mail: farhad.rezazadeh@upc.edu).}
\IEEEcompsocitemizethanks{\IEEEcompsocthanksitem H. Chergui is with i2CAT Foundation, 08034 Barcelona, Spain (e-mail: hatim.chergui@i2cat.net).}
\IEEEcompsocitemizethanks{\IEEEcompsocthanksitem M. Debbah is with the Khalifa University of Science and Technology, 127788 Abu Dhabi, UAE (e-mail: merouane.debbah@ku.ac.ae).}
\IEEEcompsocitemizethanks{\IEEEcompsocthanksitem H. Song is with the University of Maryland, Baltimore County (UMBC), 21250 Baltimore, USA (e-mail: h.song@ieee.org).}
\IEEEcompsocitemizethanks{\IEEEcompsocthanksitem L. Liu is with the Virginia Tech, 24061 Blacksburg, USA (e-mail: ljliu@vt.edu).}
%\vspace*{-.3cm}
}
% =========================================================
\pagenumbering{gobble}
\begin{document}
\IEEEaftertitletext{\vspace{-1.7\baselineskip}} % tweak the value as needed
\maketitle

\begin{abstract}
In sixth-generation (6G) Open Radio Access Networks (O-RAN), proactive control is preferable. A key open challenge is delivering control-grade predictions within Near-Real-Time (Near-RT) latency and computational constraints under multi-timescale dynamics. We therefore cast RAN Intelligent Controller (RIC) analytics as an agentic perceive–predict xApp that turns noisy, multivariate RAN telemetry into short-horizon per-User Equipment (UE) key performance indicator (KPI) forecasts to drive anticipatory control. In this regard, Transformers are powerful for sequence learning and time-series forecasting, but they are memory-intensive, which limits Near-RT RIC use. Therefore, we need models that maintain accuracy while reducing latency and data movement. To this end, we propose a lightweight \emph{Multi-Scale Structured State-Space Mixtures} (MS\textsuperscript{3}M)\footnote{The complete source code for the proposed approach and baseline implementations is publicly available for non-commercial use at \url{https://github.com/frezazadeh/agentic-ms3m}.} forecaster that mixes HiPPO-LegS kernels to capture multi-timescale radio dynamics. We develop stable discrete state-space models (SSMs) via bilinear (Tustin) discretization and apply their causal impulse responses as per-feature depthwise convolutions. Squeeze-and-Excitation gating dynamically reweights KPI channels as conditions change, and a compact gated channel-mixing layer models cross-feature nonlinearities without Transformer-level cost. The model is KPI-agnostic—\emph{Reference Signal Received Power (RSRP)} serves as a canonical use case—and is trained on sliding windows to predict the immediate next step. Empirical evaluations conducted using our bespoke O-RAN testbed KPI time-series dataset (59{,}441 windows across 13 KPIs). Crucially for O-RAN constraints, MS\textsuperscript{3}M achieves a \(0.057\,\mathrm{s}\) per-inference latency with \(\sim 0.70\)M parameters, yielding \(3\text{--}10\times\) lower latency than the Transformer baselines evaluated on the same hardware, while maintaining competitive accuracy.

\end{abstract}

\begin{IEEEkeywords}
6G, O-RAN, SSM, transformer, agentic AI, KPI, time series
\end{IEEEkeywords}

\section{Introduction}
\IEEEPARstart{T}{he} vision of 6G O-RAN is to evolve from reactive control, toward a paradigm of \textit{proactive and connected intelligence}, driven by agentic artificial intelligence (AI). In this context, proactivity goes beyond optimizing a fixed objective: it entails the ability of distributed agents to take initiative, adapt their control strategies dynamically, and even reshape interactions in anticipation of future conditions. This vision aligns with the core principles of O-RAN, which emphasize open interfaces, functional disaggregation, and the embedding of intelligence across near-real-time and non-real-time RICs. Within this framework, agents deployed via xApps and rApps are capable of reasoning, planning, and negotiating collaboratively to optimize resources while simultaneously ensuring that stringent service-level agreements are met.

A fundamental requirement for such intelligence is the ability to reason over the anticipated outcomes of various supporting \emph{tools}, among which prediction services play a central role. They enable agents to act in a \textit{farsighted} manner, assessing not only the immediate impact of their actions but also their long-term consequences. Hence, at the non-RT RIC, agents consume long-horizon predictions and digital twin simulations to derive policy intents; at the Near-RT RIC, they operationalize mid-horizon forecasts through xApps for dynamic orchestration; and at the Distributed Unit (DU), predictors inform real-time scheduling decisions. This multi-layer arrangement creates a coherent hierarchy where agents use predictive foresight to negotiate intents, plan actions, and enforce slice-level guarantees through O-RAN's open interfaces (A1, E2, and O1) \cite{etsi_ts_103983_a1_gap_v3_1_0_2024, etsi_ts_104038_e2_gap_v4_1_0_2024, etsi_ts_104043_o1_interface_v11_0_0_2024,
etsi_ts_103982_oran_architecture_v8_0_0_2024}. Here, \emph{agents} are the learning-enabled control applications---rApps at the non-RT RIC, xApps at the Near-RT RIC, and DU-level schedulers---that act on predictions across layers.

However, reliance on predictive tools introduces the challenge of error propagation. Indeed, in multi-agent negotiation within O-RAN, even small inaccuracies in predicted KPIs may distort reasoning, leading to sub-optimal agreements, unfair allocations, or even negotiation failures. Therefore, prediction tools are not auxiliary but fundamental components of agentic O-RAN systems for efficient 6G automation. Achieving highly accurate yet scalable prediction services is therefore essential to support reliable reasoning, coordination, and decision-making in autonomous network management.

\subsection{Related Work}

To develop advanced prediction services, Transformer-based \cite{vaswani2017attention} architectures have emerged as powerful models for forecasting telecom KPIs, owing to their capability to capture long-range temporal dependencies and manage multivariate telemetry. Unlike recurrent and convolutional baselines, which often struggle with scalability in long-horizon tasks, Transformers provide a flexible self-attention mechanism that adapts well to irregular and high-dimensional telecom data. Early innovations focused on efficiency, most notably the Informer model, which introduced ProbSparse attention and a distillation mechanism to reduce time and memory computational complexity to  $\mathcal{O}(L\log L)$, while maintaining forecasting accuracy \cite{zhou2021informer}. Building upon this, Autoformer incorporated time-series decomposition and an Auto-Correlation mechanism to identify seasonal patterns and improve long-term stability \cite{autoformer}. FEDformer further advanced this line of research by embedding frequency-domain projections, utilizing Fourier and wavelet bases to achieve linear complexity for extended forecasting horizons \cite{zhou2022fedformer}. To address scalability in multivariate contexts, PatchTST reformulated time series into local patches and adopted channel-independent modeling, demonstrating effectiveness in settings where telecom KPIs evolve with heterogeneous dynamics \cite{nie2023patchtst}. In parallel, the Temporal Fusion Transformer (TFT) introduced interpretable forecasting by combining self-attention with static covariates, known future inputs, and feature selection, thereby providing transparency that is highly desirable in telecom operations \cite{lim2021temporal}. ETSformer \cite{woo2022etsformer} enhances Transformers for time-series forecasting by incorporating exponential smoothing principles through novel exponential smoothing attention and frequency attention, enabling interpretable decomposition into level, growth, and seasonality components while improving long-term accuracy and efficiency. Crossformer \cite{zhang2023crossformer} advances multivariate forecasting by explicitly modeling both temporal and cross-dimension dependencies using Dimension-Segment-Wise embedding and a Two-Stage Attention mechanism within a hierarchical encoder-decoder, effectively capturing multi-scale interactions across variables. iTransformer \cite{liu2024itransformer}, in contrast, rethinks the Transformer architecture without altering its core components by inverting the input structure: attention operates over variates rather than time, and feed-forward layers refine variate tokens, yielding stronger performance, scalability with larger lookback windows, and better generalization across multivariate datasets.
These developments are consistent with the key challenges in KPI forecasting, including nonstationarity, irregular sampling, heavy-tailed error distributions, and stringent latency constraints. Recent surveys confirm the broad applicability of Transformers to long-sequence time series forecasting \cite{survey}.

\subsection{Contributions}\label{sec:contributions}
This work develops a compact, stability-aware, and near-real-time forecaster. MS$^{3}$M delivers Transformer-competitive accuracy at a fraction of latency, with stability guarantees inherited from its SSM construction \cite{gu2024mamba, gu2022s4, goel2022sashimi, gu2022dssparam, gupta2022dss, smith2022s5, gu2022hippo}—making it a suitable choice for Near-RT analytics and anticipatory control in 6G O\!-RAN. Our technical novelties and contributions are:

\begin{itemize}
  \item \textbf{(C1) MS$^{3}$M: a new multi-scale SSM forecaster.}
  We introduce \emph{Multi-Scale Structured State-Space Mixtures} (MS$^{3}$M): a strictly-causal sequence model that mixes \emph{HiPPO--LegS }\cite{gu2020hippo_arxiv} kernels across time scales, applies them as \emph{depthwise} convolutions \cite{sandler2018mobilenetv2} per embedded channel, and couples them with \emph{Squeeze--Excitation} \cite{hu2017senet} gating and a compact \emph{Gated Linear Unit} (GLU) \cite{dauphin2017glu} channel-mixing head. The design is KPI-agnostic and supports both single-target (RSRP) and multivariate next-step prediction. In this work, we focus on the \emph{single-target setting}.

  \item \textbf{(C2) Stability by construction with efficient discretization.}
  We obtain discrete, stable SSMs via bilinear (Tustin) discretization \cite{qi2024robustssm, morita2025primacyssm, bal2025pspikessm, huang2024ssm_graphs} of the continuous-time HiPPO--LegS operator and \emph{learn} per-component step sizes. The resulting impulse responses are used directly as causal kernels, yielding linear-time inference with excellent memory locality and no quadratic attention cost.

  \item \textbf{(C3) Near-RT RIC readiness (latency and footprint).}
  On our shared hardware and leakage-safe pipeline, MS$^{3}$M achieves \emph{0.057\,s} per-inference latency with $\sim\!0.70$M parameters, corresponding to \emph{$\mathbf{3.4\times}$–$\mathbf{10.3\times}$ lower} latency than state-of-the-art Transformer baselines (See \Cref{sec:xf-bench}) evaluated under identical settings (\(0.192\text{--}0.586\,\mathrm{s}\)), while maintaining competitive accuracy.

    \item \textbf{(C4) High-accuracy next-step RSRP forecasting.}
    Using our O\!-RAN testbed dataset (59{,}441 windows, 13 KPIs), MS$^{3}$M attains Root Mean Square Error (RMSE) $\mathrm{=0.292\,dB}$, Mean Absolute Error (MAE) $\mathrm{=0.170\,dB}$, and Mean Squared Error (MSE) $\mathrm{=0.090\,dB^2}$ (all errors in decibels, dB), with a coefficient of determination $R^2=0.993$. These correspond to skill gains of $+92.3\%$ (RMSE) and $+94.0\%$ (MAE) over a leakage-safe persistence baseline.

  \item \textbf{(C5) Causal, leakage-safe learning pipeline.}
  We formalize and implement a training protocol that enforces past-only inputs, chronological splits, and standardization fitted \emph{only} on the training split, with inverse-standardized reporting in physical units (dB). This prevents information leakage and ensures fair evaluation. See \Cref{sec:leakage} and \Cref{sec:results} for details and empirical justification.

  \item \textbf{(C6) Transparent complexity and ablations.}
  We provide a rigorous complexity analysis (depthwise SSM mixtures are $\mathcal{O}(L)$ in sequence length) and ablations over state dimension, number of mixture components, kernel length, and window size, guiding practical deployments under tight latency/compute budgets.

  \item \textbf{(C7) Reproducible open implementation.}
  We release a concise PyTorch implementation, along with dataset preprocessing and evaluation scripts to facilitate adoption in O\!-RAN xApps design.
\end{itemize}

The remainder of this paper is organized as follows.
\Cref{sec:model} introduces the proposed MS$^{3}$M architecture, including stable HiPPO--LegS discretization, multi-scale depthwise kernels, gating, and the prediction head.
\Cref{sec:data} details the O\!-RAN testbed, KPI collection, alignment, missingness handling, and window construction.
\Cref{sec:xf-bench} summarizes the Transformer baselines and the unified, leakage-safe benchmarking protocol.
\Cref{sec:training} specifies the training setup, while \Cref{sec:metrics} defines evaluation metrics and persistence-based skill; computational footprint and complexity considerations are discussed in \Cref{sec:complexity}.
Comprehensive empirical results, diagnostics, and ablations are reported in \Cref{sec:results}.
We conclude with key takeaways and implications for Near-RT RIC deployment in the \emph{Conclusion}, followed by \emph{references}.

% ------------------------------------------------------------------------------------------
% (Optional) If you reference de-standardization explicitly in text:
% \noindent\scriptsize De-standardization: $\widehat y^{\rm orig}_{t|t-1}=\mu_y+\sigma_y\,\hat y_{t|t-1}$,\ 
% $\sigma^{\rm orig}_t=\sigma_y\sqrt{S_t}$.

% ===================== Requirements (add to preamble) =====================
% \usepackage{algorithm,algpseudocode}
% \usepackage{amsmath,amssymb,mathtools}
% Macros used: \R, \ct, \bs  (define if not already)
% \newcommand{\R}{\mathbb{R}}
% \newcommand{\ct}{\mathrm{ct}}
% \newcommand{\bs}{\mathbf{s}}
% ========================================================================

\section{Proposed Multi-Scale Structured State-Space Mixture (MS$^{3}$M)}\label{sec:model}
We propose a strictly-causal forecaster that embeds past KPI windows, applies \emph{depthwise} causal state-space filters obtained from HiPPO--LegS dynamics at multiple time scales, and mixes channels via Squeeze--Excitation (SE) gating and a GLU. We summarize the main symbols and definitions in Table~\ref{tab:notation-ms3m}. The full training and inference procedures are summarized in Algorithms~\ref{alg:ms3m_train_math} and \ref{alg:ms3m_infer_math}.

\subsection{Problem Setup and Causality}\label{subsec:setup}
Let $\{\bx_t\}_{t=1}^{T}$ be a multivariate KPI sequence with $\bx_t\in\R^{F}$ and fix a window length $L\in\mathbb{N}$ (i.e., the number of past steps fed to the model). We form chronological pairs
\begin{equation}\label{eq:winpairs}
\X_t=(\bx_{t-L+1},\ldots,\bx_t)\in\R^{L\times F},\qquad
\by_{t+1}\in\R^{O},
\end{equation}
where $O{=}1$ for RSRP or $O{=}F$ for multivariate outputs. Train-only standardization maps $(\X,\by)\mapsto(\widetilde{\X},\widetilde{\by})$ (Alg.~\ref{alg:ms3m_train_math}, lines~1--3). By construction, $\X_t\in\cF_t$ while $\by_{t+1}$ is the next step, so any measurable $f_\theta$ on $\widetilde{\X}_t$ is a past-only (causal) predictor for $\by_{t+1}$.

We use $d$ to denote the embedding width, $N$ for the SSM state dimension, $M$ to denote the number of mixture components, and $L_k$ for the (finite) kernel support length. Shapes used below are chosen to be \emph{per embedded channel}: $B\in\R^{d\times N}$, $C\in\R^{d\times N}$, and $D\in\R^{d}$ (one depthwise filter per channel).

\subsection{Leakage-safe preprocessing}\label{sec:preprocess}
We perform a chronological split into train/validation/test and fit all scalers on the \emph{training} windows only:
\[
\widetilde{\bx}_{t,f} \;=\; \frac{\bx_{t,f}-\mu^{(x)}_f}{\sigma^{(x)}_f},\qquad
\widetilde{\by}_{t+1} \;=\; \frac{\by_{t+1}-\mu^{(y)}}{\sigma^{(y)}}.
\]
The same affine maps are then applied to validation/test. Metrics (MSE, RMSE, and MAE) are reported after inverse standardization (physical units, e.g., dB for RSRP). Fitting statistics on train only prevents train–test contamination; using \(\X_t\) to predict \(\by_{t+1}\) enforces past-only inputs.

\subsection{HiPPO--LegS Kernels and Stable Discretization}\label{subsec:kernels}
\paragraph{Continuous-time template}
For indices $i,j\in\{0,\dots,N{-}1\}$, the HiPPO--LegS operator and reference input are
\begin{subequations}\label{eq:legs-op}
\begin{align}
(A_{\ct})_{ij} &=
\begin{cases}
-\sqrt{(2i{+}1)(2j{+}1)}, & i>j,\\
-(i{+}1), & i=j,\\
0, & i<j,
\end{cases} \\
(B_{\mathrm{ref}})_{i} &= \sqrt{2i{+}1}.
\end{align}
\end{subequations}

A single-channel continuous-time SSM with input $u(t)$ and latent $\bs(t)\in\R^N$ is
\begin{subequations}\label{eq:ct-ssm}
\begin{align}
\dot{\bs}(t) &= A_{\ct}\,\bs(t)+B\,u(t), \label{eq:ct-ssm-state}\\
y(t) &= C\,\bs(t)+D\,u(t), \label{eq:ct-ssm-output}
\end{align}
\end{subequations}
with learnable $(B,C,D)$ (Alg.~\ref{alg:ms3m_train_math}, lines~4--5). In MS$^{3}$M, $B$ is initialized near $B_{\mathrm{ref}}$ and then trained.

% ============================ MS^3M Notation (1-column, expanded) ============================
\begin{table}[t!]
\centering
\scriptsize
\setlength{\tabcolsep}{3pt}
\caption{Major Notations.}
\label{tab:notation-ms3m}
\resizebox{\columnwidth}{!}{%
\begin{tabular}{lll}
\toprule
\textbf{Symbol} & \textbf{Meaning} & \textbf{Size/Type} \\
\midrule
\multicolumn{3}{l}{\emph{Sets, probability, and operators}}\\
$*$ & (1D) convolution (causal where stated) & --- \\
$\bigO(\cdot)$ & Big-\!O growth notation & --- \\
$\cF_t$ & Information $\sigma$-algebra up to time $t$ & $\sigma$-algebra \\
$\diag(\cdot)$ & Diagonal matrix formed from a vector & matrix \\
$\E[\cdot],\ \Var[\cdot],\ \Cov[\cdot]$ & Expectation, variance, covariance & scalars/matrices \\
$I$ & Identity matrix & $\R^{n\times n}$ \\
$\norm{\cdot},\ \abs{\cdot},\ \ip{\cdot}{\cdot}$ & Norm, absolute value, inner product & --- \\
$\odot$ & Hadamard (element-wise) product & --- \\
$(\Omega,\cF,\PP)$ & Probability space & --- \\
$\R,\ \mathbb{Z}_+$ & Reals; nonnegative integers & sets \\
$\Re(\cdot)$ & Real part of a complex number & scalar \\
$\rho(\cdot)$ & Spectral radius & scalar \\
$\Tr(\cdot)$ & Trace of a matrix & scalar \\
\midrule
\multicolumn{3}{l}{\emph{Time, indices, dimensions}}\\
$c$ & ProbSparse factor (Informer) & $\mathbb{Z}_+$ \\
$d$ & Embedding width (channels) after input projection & $\mathbb{Z}_+$ \\
$d_s$ & SSM state size (alias of $N$ in complexity) & $\mathbb{Z}_+$ \\
$E,D$ & Encoder / decoder depth (baselines) & $\mathbb{Z}_+$ \\
$\mathcal I_{\rm tr},\mathcal I_{\rm va},\mathcal I_{\rm te}$ & Train/val/test index sets & subsets of $\{1{:}T\}$ \\
$F$ & Number of KPIs (features) & $\mathbb{Z}_+$ \\
$h=\alpha d$ & GLU hidden width ($\alpha<2$) & $\mathbb{Z}_+$ \\
$K$ & Alt.\ count of KPIs (in data sec.) or Top-$K$ freqs (ETSformer) & $\mathbb{Z}_+$ \\
$L$ & Window length (past steps) & $\mathbb{Z}_+$ \\
$L_k$ & Kernel support length (taps) & $\mathbb{Z}_+$ \\
$L_\ell$ & Number of MS$^{3}$M layers & $\mathbb{Z}_+$ \\
$M$ & Number of SSM mixture components (time scales) & $\mathbb{Z}_+$ \\
$N$ & SSM state dimension per channel & $\mathbb{Z}_+$ \\
$N_p$ & \# patches (Patch/Crossformer) & $\mathbb{Z}_+$ \\
$O$ & Output dimension ($1$ for RSRP; $F$ multivariate) & $\mathbb{Z}_+$ \\
$P,S$ & Patch length / stride (patching) & $\mathbb{Z}_+$ \\
$T$ & Series length & $\mathbb{Z}_+$ \\
$\mathcal{T}$ & Test timestamps used for metrics & index set \\
$W,H$ & Lookback window and forecast horizon (baseline section) & $\mathbb{Z}_+$ \\
\midrule
\multicolumn{3}{l}{\emph{Data, windows, scaling}}\\
$D_{\text{kpi}}$ & KPI table aligned on common timeline & matrix \\
$N_{\text{seq}}$ & Contiguous sequence length for sampling & $\mathbb{Z}_+$ \\
$Q_1,Q_3,\mathrm{IQR}$ & 10th/90th percentiles; inter-quantile range (outlier rule) & scalars \\
$t_m,\ \Delta,\ \tau$ & Grid time, aggregation window, stride (alignment) & scalars \\
$\X_t$ & Input window $(\bx_{t-L+1},\ldots,\bx_t)$ & $\R^{L\times F}$ \\
$\bx_t$ & KPI vector at time $t$ & $\R^{F}$ \\
$\by_{t+1}$ & Next-step target & $\R^{O}$ \\
$\bm\mu_x,\bm\sigma_x$ & Per-feature mean/std (train-only) & $\R^{F}$ \\
$\mu_y,\sigma_y$ & Target mean/std (train-only) & scalars \\
$\widetilde{\X},\ \widetilde{\by}$ & Standardized inputs/targets & as $\X,\by$ \\
\midrule
\multicolumn{3}{l}{\emph{Embedding, gating, mixers, head}}\\
$H^{(0)}$ & Embedded sequence $\widetilde{\X}W_{\mathrm{in}}$ & $\R^{L\times d}$ \\
$H^{(\ell)}$ & Layer-$\ell$ output after GLU + LN & $\R^{L\times d}$ \\
$\hat{\by}$ & Standardized prediction & $\R^{O}$ \\
$\mathrm{LN}(\cdot)$ & Layer normalization operator & map $\R^{d}\!\to\!\R^{d}$ \\
$r$ & SE reduction ratio & $\mathbb{Z}_+$ \\
$s^{(\ell)},\,g^{(\ell)}$ & SE squeeze vector and gate & $\R^{d},\ \R^{d}$ \\
$U^{(\ell)}$ & Depthwise-conv output at layer $\ell$ & $\R^{L\times d}$ \\
$W_{\downarrow}^{(\ell)}$ & GLU down-projection & $\R^{h\times d}$ \\
$W_{\mathrm{head}},b_{\mathrm{head}}$ & Prediction head & $\R^{d\times O},\ \R^{O}$ \\
$W_{\mathrm{in}}$ & Input projection & $\R^{F\times d}$ \\
$W_{\uparrow,a}^{(\ell)},W_{\uparrow,g}^{(\ell)}$ & GLU “act”/“gate” weights & $\R^{d\times h}$ \\
$Y^{(\ell)}$ & Residual + layer-norm output & $\R^{L\times d}$ \\
$Z^{(\ell)}$ & GLU mixer output & $\R^{L\times d}$ \\
$\phi(\cdot),\ \sigma(\cdot)$ & Nonlinearity (e.g., GELU/ReLU); sigmoid & scalar-wise maps \\
$\widehat{\by}_{\mathrm{phys}}$ & De-standardized prediction & $\R^{O}$ \\
\midrule
\multicolumn{3}{l}{\emph{HiPPO--LegS kernels and discretization}}\\
$A(\Delta t)$ & Tustin-discretized transition & $\R^{N\times N}$ \\
$A_{\ct}$ & HiPPO--LegS continuous-time operator & $\R^{N\times N}$ \\
$B,C,D$ & Discrete SSM params (depthwise) & $\R^{d\times N},\ \R^{d\times N},\ \R^{d}$ \\
$B_{\mathrm{ref}}$ & Reference input vector (initialization) & $\R^{N}$ \\
$\Delta t^{(\ell,m)}$ & Learned positive step (time scale), $\Delta t=\phi(\tau)$ & $\R_{>0}$ \\
$k_c[\ell]$ & Depthwise tap for channel $c$ at lag $\ell$ & scalar \\
$k^{(\ell)}[\cdot]$ & Mixture taps $\sum_m k^{(\ell,m)}[\cdot]$ & $\R^{d\times L_k}$ \\
$k^{(\ell,m)}[\cdot]$ & Taps of component $m$ at layer $\ell$ & $\R^{d\times L_k}$ \\
$\tau^{(\ell,m)}$ & Raw time-scale parameter (per layer, component) & $\R$ \\
\midrule
\multicolumn{3}{l}{\emph{Optimization and evaluation}}\\
$\mathcal{B}$ & Mini-batch index set & subset of $\mathcal I_{\rm tr}$ \\
$c_{\max}$ & Gradient norm clip threshold & scalar \\
$E_{\max},\ p,\ \texttt{tol}$ & Max epochs; patience; val.\ improvement tol. & integers / scalar \\
$\gamma_e,\ \beta$ & Learning rate at epoch $e$; LR decay factor & scalars \\
$\lambda$ & Weight decay coefficient & scalar \\
$\mathrm{MSE},\mathrm{RMSE},\mathrm{MAE}$ & Error metrics (reported in physical units) & scalars \\
$R^2$ & Coefficient of determination & scalar \\
$\mathrm{Skill}_{\mathrm{RMSE}},\mathrm{Skill}_{\mathrm{MAE}}$ & Skill vs.\ persistence (Eq.~\eqref{eq:skill}) & scalars \\
$\theta,\ \theta^\star$ & Parameters; best (early-stopped) params & --- \\
\midrule
\multicolumn{3}{l}{\emph{Complexity variables (baselines \& analysis)}}\\
$d_h$ & Hidden size (e.g., TFT LSTM) & $\mathbb{Z}_+$ \\
$d_{\mathrm{model}}$ & Model width in baselines & $\mathbb{Z}_+$ \\
$M$ (freq.) & Retained spectral modes (FEDformer) & $\mathbb{Z}_+$ \\
\bottomrule
\end{tabular}%
}
\end{table}

\paragraph{Bilinear (Tustin) discretization}
For a step $\Delta t>0$, we define the discrete transition
\begin{equation}\label{eq:tustin-disc}
A(\Delta t)=\Big(I-\tfrac{\Delta t}{2}A_{\ct}\Big)^{-1}\Big(I+\tfrac{\Delta t}{2}A_{\ct}\Big),
\end{equation}
and the depthwise impulse response (per channel $c$) with taps
\begin{equation}\label{eq:taps}
\begin{split}
k_c[0]   &= (C B)_c + D_c,\\
k_c[\ell]&= (C A(\Delta t)^{\ell} B)_c,\quad \ell=1,\dots,L_k{-}1,
\end{split}
\end{equation}
as instantiated in Alg.~\ref{alg:ms3m_train_math}, lines~6--8.

\begin{proposition}[Schur stability via bilinear transform]\label{prop:schur}
If $A_{\ct}$ is Hurwitz (all eigenvalues in the open left half-plane), then for any $\Delta t>0$,
$A(\Delta t)$ in \eqref{eq:tustin-disc} is Schur-stable: $\rho(A(\Delta t))<1$.
\end{proposition}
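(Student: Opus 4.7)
The plan is to reduce Schur stability of $A(\Delta t)$ to a pointwise statement about the scalar Cayley--Möbius transform $\varphi(z) = \frac{1 + (\Delta t/2) z}{1 - (\Delta t/2) z}$ applied to the spectrum of $A_{\ct}$, and then to invoke the classical fact that $\varphi$ sends the open left half-plane into the open unit disk. Because this is a direct finite-dimensional spectral argument, the proof should be short and elementary.

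First I would verify that $A(\Delta t)$ is well-defined, i.e.\ that $I - (\Delta t/2) A_{\ct}$ is invertible. Since $A_{\ct}$ is Hurwitz, every eigenvalue $\lambda$ of $A_{\ct}$ satisfies $\Re(\lambda) < 0$, so $\Re\bigl(1 - (\Delta t/2)\lambda\bigr) \geq 1 > 0$ for any $\Delta t > 0$; hence $1$ is not in the spectrum of $(\Delta t/2) A_{\ct}$, and the inverse exists. Second, I would note that $(I - (\Delta t/2) A_{\ct})^{-1}$ and $I + (\Delta t/2) A_{\ct}$ are both rational functions of the single matrix $A_{\ct}$, so they commute. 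This lets me apply the spectral mapping theorem for rational functions and conclude that
\begin{equation*}
\mathrm{spec}\bigl(A(\Delta t)\bigr) \;=\; \bigl\{\,\varphi(\lambda) \,:\, \lambda \in \mathrm{spec}(A_{\ct})\,\bigr\}.
\end{equation*}

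Third, I would bound $|\varphi(\lambda)|$ for $\Re(\lambda) < 0$ by the standard one-line identity
\begin{equation*}
\bigl|1 + \tfrac{\Delta t}{2}\lambda\bigr|^{2} - \bigl|1 - \tfrac{\Delta t}{2}\lambda\bigr|^{2} \;=\; 2\,\Delta t\,\Re(\lambda) \;<\; 0,
\end{equation*}
which rearranges to $|\varphi(\lambda)| < 1$. Taking the maximum over the (finite) spectrum of $A_{\ct}$ yields $\rho(A(\Delta t)) = \max_{\lambda}|\varphi(\lambda)| < 1$, which is the desired Schur stability.

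The argument has no real obstacle: the only delicate point is making sure the functional calculus step is legitimate, and this is handled simply by noting that both factors defining $A(\Delta t)$ are polynomials (resp.\ rational expressions) in $A_{\ct}$ and therefore commute, so the spectral mapping theorem applies verbatim without requiring diagonalizability of $A_{\ct}$. If one wished to avoid even that appeal, the same conclusion follows from a Jordan-form computation applied to $A_{\ct}$, since $\varphi$ is holomorphic on a neighborhood of the closed left half-plane punctured at $2/\Delta t$, which contains the spectrum.
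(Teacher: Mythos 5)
Your proof is correct and takes essentially the same route as the paper: both map each eigenvalue $\lambda$ of $A_{\ct}$ through the Cayley transform $\varphi(\lambda)=\bigl(1+\tfrac{\Delta t}{2}\lambda\bigr)/\bigl(1-\tfrac{\Delta t}{2}\lambda\bigr)$ and use the identity $|1+\alpha|^2-|1-\alpha|^2=4\Re\alpha<0$ to place the image in the open unit disk. Your explicit appeal to the spectral mapping theorem is a slightly more careful treatment of the possibly non-diagonalizable case than the paper's bare eigenvector computation, but the argument is the same in substance.
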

\begin{proof}
Since $\Re\lambda<0$ for every eigenvalue $\lambda$ of $A_{\ct}$, we have
$1-\frac{\Delta t}{2}\lambda\neq 0$; hence $I-\frac{\Delta t}{2}A_{\ct}$ is invertible and
$A(\Delta t)$ is well-defined. If $A_{\ct}v=\lambda v$ with $v\neq 0$, then
\[
A(\Delta t)v=\frac{1+\frac{\Delta t}{2}\lambda}{1-\frac{\Delta t}{2}\lambda}\,v
\;\eqqcolon\; \lambda_d v.
\]
Let $\alpha\coloneqq \frac{\Delta t}{2}\lambda$; then $\Re\alpha<0$ and
\[
|1+\alpha|^2=1+2\Re\alpha+|\alpha|^2
\quad\text{and}\quad
|1-\alpha|^2=1-2\Re\alpha+|\alpha|^2,
\]
so $|1+\alpha|<|1-\alpha|$ and therefore $|\lambda_d|<1$.
Thus, all eigenvalues of $A(\Delta t)$ lie in the open unit disk, implying
$\rho(A(\Delta t))<1$.
\end{proof}

\begin{lemma}[Exponential kernel decay]\label{lem:decay}
Let $\alpha\in(0,1)$ satisfy $\|A(\Delta t)\|\le \alpha$ in some operator norm. Then
$\|k[\ell]\|\le \|C\|\,\|B\|\,\alpha^{\ell}$ for all $\ell\ge 1$ and the tail energy satisfies
$\sum_{\ell\ge L_k}\|k[\ell]\|\le \frac{\|C\|\|B\|}{1-\alpha}\,\alpha^{L_k}$.
\end{lemma}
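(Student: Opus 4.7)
The plan is to split \Cref{lem:decay} into two standard parts: a per-lag exponential bound, followed by a tail-sum bound obtained by summing a geometric series. Both rely only on submultiplicativity of an induced operator norm applied to the explicit cascade form $k[\ell]=C\,A(\Delta t)^{\ell}\,B$ from \eqref{eq:taps}.

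First, I would fix a vector norm on $\R^{N}$ and interpret $\|\cdot\|$ throughout as the operator norm it induces on matrices, so that submultiplicativity applies consistently to the factors $C$, $A(\Delta t)$, and $B$. For $\ell\ge 1$, the taps in \eqref{eq:taps} have the compositional form $k[\ell]=C\,A(\Delta t)^{\ell}\,B$; submultiplicativity then yields $\|k[\ell]\|\le\|C\|\,\|A(\Delta t)^{\ell}\|\,\|B\|$. A one-line induction on $\ell$ (submultiplicativity applied $\ell-1$ times) gives $\|A(\Delta t)^{\ell}\|\le\|A(\Delta t)\|^{\ell}\le\alpha^{\ell}$, which completes the per-lag bound $\|k[\ell]\|\le\|C\|\,\|B\|\,\alpha^{\ell}$.

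For the tail sum, I would substitute the per-lag bound into $\sum_{\ell\ge L_k}\|k[\ell]\|$ and recognize the remaining geometric series $\sum_{\ell\ge L_k}\alpha^{\ell}=\alpha^{L_k}/(1-\alpha)$, which delivers the claimed estimate $\|C\|\,\|B\|\,\alpha^{L_k}/(1-\alpha)$. No step of the argument is technically delicate; it is a routine application of submultiplicativity followed by summation.

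The only nontrivial aspect is essentially auxiliary and concerns the hypothesis itself: the lemma presumes the existence of an operator norm in which $\|A(\Delta t)\|\le\alpha<1$. \Cref{prop:schur} guarantees $\rho(A(\Delta t))<1$, and Gelfand's spectral-radius formula then ensures that for any $\alpha\in(\rho(A(\Delta t)),1)$ there is an equivalent operator norm realizing $\|A(\Delta t)\|\le\alpha$. I would include this remark to make explicit that the hypothesis is non-vacuous for every $\Delta t>0$ whenever $A_{\ct}$ is Hurwitz, so \Cref{lem:decay} applies directly to the kernels produced by \eqref{eq:tustin-disc}.
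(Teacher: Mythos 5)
Your proof is correct and is precisely the standard argument the paper relies on: the paper states \Cref{lem:decay} without any written proof, and the intended justification is exactly your two steps---submultiplicativity applied to $k[\ell]=C\,A(\Delta t)^{\ell}B$ to get $\|k[\ell]\|\le\|C\|\,\|B\|\,\alpha^{\ell}$, followed by summing the geometric series for the tail. Your closing remark that \Cref{prop:schur} together with Gelfand's formula makes the hypothesis $\|A(\Delta t)\|\le\alpha<1$ attainable is a useful addition the paper leaves implicit (it only notes the norm condition again in the ``Truncation control'' paragraph of \Cref{subsec:ident-guarantees}).
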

\emph{Consequence:} Finite $L_k$ yields a controlled truncation error that decays geometrically.

% ===================== Algorithm 1 (TRAINING) — 1-column =====================
\begin{algorithm}[t!]
\footnotesize
\caption{MS\textsuperscript{3}M — Leakage-Safe Training}
\label{alg:ms3m_train_math}
\begin{algorithmic}[1]
\Require Multivariate series $\{\mathbf{x}_t\}_{t=1}^T$, $\mathbf{x}_t\!\in\!\R^{F}$; window $L$; output $O$;
state $N$; mixture size $M$; layers $L_{\ell}$; kernel length $L_k$; width $d$;
HiPPO--LegS $A_{\ct}\!\in\!\R^{N\times N}$; splits $\mathcal I_{\mathrm{tr}},\mathcal I_{\mathrm{va}}$;
max epochs $E_{\max}$; patience $p$; step sizes $\{\gamma_e\}$; weight decay $\lambda$; clip $c_{\max}$
\Ensure Best parameters $\theta^\star$; train scalers $(\bm\mu_x,\bm\sigma_x)$, $(\mu_y,\sigma_y)$

\State \textbf{Dataset and scalers (train-only fit).}
\State Build pairs $(\mathbf{X}_t,\mathbf{y}_{t+1})$ with
$\mathbf{X}_t=(\mathbf{x}_{t-L+1},\dots,\mathbf{x}_t)\in\R^{L\times F}$,
$\mathbf{y}_{t+1}\in\R^{O}$ \textit{(chronological)}.
\State On $\mathcal I_{\mathrm{tr}}$ compute
$\bm\mu_x,\bm\sigma_x$ (per feature) and $\mu_y,\sigma_y$; set
$\widetilde{\mathbf{X}}=(\mathbf{X}-\bm\mu_x)\oslash\bm\sigma_x$,
$\widetilde{\mathbf{y}}=(\mathbf{y}-\mu_y)/\sigma_y$ on both splits.

\State \textbf{Parameters.}
\State Input map $W_{\mathrm{in}}\!\in\!\R^{F\times d}$; for each layer $\ell$ and component $m$:
$B^{(\ell,m)}\!\in\!\R^{d\times N}$, $C^{(\ell,m)}\!\in\!\R^{d\times N}$,
$D^{(\ell,m)}\!\in\!\R^{d}$, $\tau^{(\ell,m)}\!\in\!\R$ with
$\Delta t^{(\ell,m)}\!=\!\phi(\tau^{(\ell,m)})$ (e.g., softplus).
SE gate parameters $(W_1^{(\ell)},W_2^{(\ell)})$; GLU parameters; head $W_{\mathrm{head}}$.

\State \textbf{Discrete SSM and impulse responses.}
\State For any $\Delta t>0$, define
$A(\Delta t)=\big(I-\frac{\Delta t}{2}A_{\ct}\big)^{-1}\!\big(I+\frac{\Delta t}{2}A_{\ct}\big)$.
For $(C,A,B,D)$, define depthwise taps $k[0]=CB+D$ and
$k[\ell]=CA^{\ell}B$ for $\ell=1,\dots,L_k{-}1$.

\State \textbf{Forward map $f_\theta$ for a window (mathematical form).}
\State Embed $H^{(0)}= \mathbf{X} W_{\mathrm{in}}\in\R^{L\times d}$.
\For{$\ell=1{:}L_{\ell}$}
  \State For $m=1{:}M$ set $A^{(\ell,m)}\!=\!A(\Delta t^{(\ell,m)})$ and
  $k^{(\ell,m)}$ as above; define the mixture taps
  $k^{(\ell)}[\cdot]=\sum_{m=1}^{M} k^{(\ell,m)}[\cdot]\in\R^{d\times L_k}$.
  \State \emph{Depthwise causal convolution (per channel $c$):}
  \[
    U^{(\ell)}_{t,c}\;=\;\sum_{\tau=0}^{L_k-1} k^{(\ell)}_{c}[\tau]\;H^{(\ell-1)}_{t-\tau,c}.
  \]
  \State \emph{Squeeze--Excitation gate:}
  $s^{(\ell)}=\frac{1}{L}\sum_{t=1}^{L} H^{(\ell-1)}_{t,\cdot}$,
  $g^{(\ell)}=\sigma\!\big(W_2^{(\ell)}\phi(W_1^{(\ell)} s^{(\ell)})\big)$,
  $\widehat{H}^{(\ell)}_{t,\cdot}= U^{(\ell)}_{t,\cdot}\odot g^{(\ell)}$.
  \State \emph{Residual \& norm:} $Y^{(\ell)}=\mathrm{LN}\!\big(H^{(\ell-1)}+\widehat{H}^{(\ell)}\big)$.
  \State \emph{GLU mix:}
  $Z^{(\ell)}=W_{\downarrow}^{(\ell)}\!\big(\phi(W_{\uparrow,a}^{(\ell)}Y^{(\ell)})\odot
  \sigma(W_{\uparrow,g}^{(\ell)}Y^{(\ell)})\big)$,
  $H^{(\ell)}=\mathrm{LN}\!\big(Y^{(\ell)}+Z^{(\ell)}\big)$.
\EndFor
\State \emph{Head (last step):} $\hat{\mathbf{y}}= W_{\mathrm{head}}\;H^{(L_{\ell})}_{L,\cdot}\in\R^{O}$.

\State \textbf{Objective and update.}
\State Training loss on a batch $\mathcal B\subset\mathcal I_{\mathrm{tr}}$:
$\mathcal L_{\mathrm{tr}}(\theta)=\frac{1}{|\mathcal B|}\sum_{t\in\mathcal B}
\| f_\theta(\widetilde{\mathbf{X}}_t)-\widetilde{\mathbf{y}}_{t+1}\|_2^2$.
\State Compute gradient $g=\nabla_\theta \mathcal L_{\mathrm{tr}}(\theta)$; clip
$g\leftarrow g\cdot\min\!\big(1,c_{\max}/\|g\|\big)$; perform a weight-decayed step
$\theta\leftarrow \theta-\gamma_e\,(g+\lambda\,\theta)$.

\State \textbf{Early stopping and step-size scheduling.}
\State At epoch end, define
$\mathcal L_{\mathrm{va}}(e)=\frac{1}{|\mathcal I_{\mathrm{va}}|}\sum_{t\in\mathcal I_{\mathrm{va}}}
\| f_\theta(\widetilde{\mathbf{X}}_t)-\widetilde{\mathbf{y}}_{t+1}\|_2^2$.
If $\mathcal L_{\mathrm{va}}(e)$ improves the best by $>\!{\tt tol}$, set
$\theta^\star\!\leftarrow\!\theta$, reset counter; else increase counter.
When counter $\ge p$, stop. Optionally set
$\gamma_{e+1}\!\leftarrow\!\beta\,\gamma_e$ on plateaus ($\beta\in(0,1)$).

\State \textbf{Return.} $\theta^\star$, $(\bm\mu_x,\bm\sigma_x)$, $(\mu_y,\sigma_y)$.
\end{algorithmic}
\end{algorithm}

% ===================== Algorithm 2 (INFERENCE) — 1-column =====================
\begin{algorithm}[t!]
\footnotesize
\caption{MS$^{3}$M — Next-Step Inference}
\label{alg:ms3m_infer_math}
\begin{algorithmic}[1]
\Require Trained $\theta^\star$; scalers $(\bm\mu_x,\bm\sigma_x)$, $(\mu_y,\sigma_y)$;
new window $\mathbf{X}_{\mathrm{new}}\!\in\!\R^{L\times F}$
\Ensure $\widehat{\mathbf{y}}_{\mathrm{phys}}\in\R^{O}$

\State Standardize $\widetilde{\mathbf{X}}=(\mathbf{X}_{\mathrm{new}}-\bm\mu_x)\oslash\bm\sigma_x$.
\State Compute $\hat{\mathbf{y}}=f_{\theta^\star}(\widetilde{\mathbf{X}})$ as in lines 12–23 of Alg.~\ref{alg:ms3m_train_math}.
\State Inverse-standardize $\widehat{\mathbf{y}}_{\mathrm{phys}}=\mu_y+\sigma_y\,\hat{\mathbf{y}}$.
\State \Return $\widehat{\mathbf{y}}_{\mathrm{phys}}$.
\end{algorithmic}
\end{algorithm}

\subsection{Depthwise Convolution and Multi-Scale Mixture}\label{subsec:mixture}
\paragraph{Embedding}
We first embed standardized inputs to width $d$:
\begin{equation}\label{eq:embed}
H^{(0)}=\widetilde{\X}\,W_{\mathrm{in}}\in\R^{L\times d},
\qquad W_{\mathrm{in}}\in\R^{F\times d}
\quad\text{(Alg.~\ref{alg:ms3m_train_math}, line~10).}
\end{equation}

\paragraph{Mixture across $M$ time scales}
For layer $\ell\in\{1,\dots,L_\ell\}$ and component $m\in\{1,\dots,M\}$, we learn a positive step
$\Delta t^{(\ell,m)}=\phi(\tau^{(\ell,m)})$ (e.g., softplus of a raw parameter), form
$A^{(\ell,m)}=A(\Delta t^{(\ell,m)})$, and compute taps $k^{(\ell,m)}[\cdot]$ by \eqref{eq:taps}. We then \emph{sum} components:
\begin{equation}\label{eq:mixture-taps}
k^{(\ell)}[\cdot]\;=\;\sum_{m=1}^{M}k^{(\ell,m)}[\cdot]\in\R^{d\times L_k},
\qquad\text{(Alg.~\ref{alg:ms3m_train_math}, lines~11--12).}
\end{equation}
\emph{Rationale:} Distinct $\Delta t^{(\ell,m)}$ \emph{values} induce different decay/time constants, so \eqref{eq:mixture-taps} captures both fast and slow dynamics within the same receptive field.

\paragraph{Depthwise causal convolution}
With left zero-padding (strict causality), for channel $c\in\{1,\dots,d\}$ and time $t\in\{1,\dots,L\}$,
\begin{equation}\label{eq:dwconv}
U^{(\ell)}_{t,c}=\sum_{\tau=0}^{L_k-1}k^{(\ell)}_{c}[\tau]\;H^{(\ell-1)}_{t-\tau,c},
\quad\text{(Alg.~\ref{alg:ms3m_train_math}, line~13).}
\end{equation}
By Lemma~\ref{lem:decay}, $\|U^{(\ell)}_{\cdot,c}\|$ is stable and the truncation error is bounded by a geometric tail.

\subsection{Gating, Cross-Channel Mixing, and Normalization}\label{subsec:gating}
We modulate channels using SE gating and then apply a compact GLU mixer.

\paragraph{SE gate}
Let $s^{(\ell)}\!=\!\frac{1}{L}\sum_{t=1}^{L} H^{(\ell-1)}_{t,\cdot}\in\R^{d}$, choose a reduction $r\in\mathbb{N}$, and define
\begin{subequations}\label{eq:se}
\begin{align}
g^{(\ell)} &= \sigma\!\Big(W_2^{(\ell)}\,\phi\big(W_1^{(\ell)}\,s^{(\ell)}\big)\Big), \label{eq:se-def}\\
g^{(\ell)} &\in (0,1)^{d}, \label{eq:se-range}\\
W_1^{(\ell)} &\in \R^{d\times \lceil d/r\rceil},\quad
W_2^{(\ell)} \in \R^{\lceil d/r\rceil\times d}. \label{eq:se-shapes}
\end{align}
\end{subequations}
Apply $g^{(\ell)}$ channel-wise: $\widehat{H}^{(\ell)}_{t,\cdot}=U^{(\ell)}_{t,\cdot}\odot g^{(\ell)}$.

\paragraph{Residual and layer normalization}
Set
\begin{equation}\label{eq:res-ln1}
Y^{(\ell)}=\mathrm{LN}\!\big(H^{(\ell-1)}+\widehat{H}^{(\ell)}\big),
\end{equation}
where $\mathrm{LN}$ is per-time-step layer normalization:
$\mathrm{LN}(\mathbf{z})=\gamma\odot \frac{\mathbf{z}-\mu(\mathbf{z})}{\sigma(\mathbf{z})+\epsilon}+\beta$, with trainable $(\gamma,\beta)\in\R^{d}$.

\paragraph{GLU channel mixer}
We choose a hidden width $h=\alpha d$ with small $\alpha$ and define
\begin{equation}\label{eq:glu}
Z^{(\ell)}=W_{\downarrow}^{(\ell)}\!\Big(\phi(W_{\uparrow,a}^{(\ell)}Y^{(\ell)})
\odot \sigma(W_{\uparrow,g}^{(\ell)}Y^{(\ell)})\Big),
\end{equation}
with $W_{\uparrow,a}^{(\ell)},W_{\uparrow,g}^{(\ell)}\in\R^{d\times h}$ and
$W_{\downarrow}^{(\ell)}\in\R^{h\times d}$. Finalize the layer with
\begin{equation}\label{eq:res-ln2}
H^{(\ell)}=\mathrm{LN}\!\big(Y^{(\ell)}+Z^{(\ell)}\big),
\qquad\text{(Alg.~\ref{alg:ms3m_train_math}, lines~14--16).}
\end{equation}

\subsection{Prediction Head and Exact Mapping}\label{subsec:head-obj}
We read out only the representation at the most recent time index:
\begin{equation}\label{eq:head}
\begin{aligned}
\hat{\by} &= W_{\mathrm{head}}\, H^{(L_\ell)}_{L,\cdot} + b_{\mathrm{head}},\\[-2pt]
W_{\mathrm{head}} &\in \R^{d\times O},\quad
b_{\mathrm{head}} \in \R^{O},\quad
\hat{\by} \in \R^{O},
\end{aligned}
\end{equation}
as in Alg.~\ref{alg:ms3m_train_math}, line~17. The mapping
\[
(\widetilde{\X}\mapsto H^{(0)}\mapsto \{U^{(\ell)},\widehat{H}^{(\ell)},H^{(\ell)}\}_{\ell=1}^{L_\ell}\mapsto \hat{\by})
\]
is strictly causal \cite{gu2022s4}, linear in the convolutional part, and nonlinear only in channel-mixing/gating.

\subsection{Objective, Regularization, and Optimization}\label{subsec:obj}
For a batch $\mathcal{B}$, we minimize standardized MSE with weight decay (equivalently, an $\ell_2$ penalty):
\begin{equation}\label{eq:mse-obj}
\mathcal{L}_{\mathrm{tr}}(\theta)=\frac{1}{|\mathcal{B}|}\sum_{t\in\mathcal{B}}
\big\| f_{\theta}(\widetilde{\X}_{t}) - \widetilde{\by}_{t+1}\big\|_2^2
\;+\;\lambda\,\|\theta\|_2^2,
\end{equation}
(Alg.~\ref{alg:ms3m_train_math}, line~19). Parameters are updated with a clipped step
\begin{equation}\label{eq:clip-step}
g=\nabla_\theta \mathcal{L}_{\mathrm{tr}}(\theta),\quad
\tilde g=g\cdot\min\!\big(1,c_{\max}/\|g\|\big),\quad
\theta\leftarrow \theta-\gamma_e\,\tilde g,
\end{equation}
and early stopping on validation MSE (Alg.~\ref{alg:ms3m_train_math}, line~22).

\subsection{Inference and De-standardization}\label{subsec:infer}
Given a new window $\X_{\mathrm{new}}$, standardize it using train-only scalers, compute $\hat{\by}=f_{\theta^\star}(\widetilde{\X}_{\mathrm{new}})$ as above, and invert the target scaling
\begin{equation}\label{eq:destandardize}
\widehat{\by}_{\mathrm{phys}}=\mu_y+\sigma_y\,\hat{\by},
\qquad\text{(Alg.~\ref{alg:ms3m_infer_math}, lines~1--3).}
\end{equation}

\subsection{Identifiability \& Guarantees}\label{subsec:ident-guarantees}
We clarify what aspects of MS$^{3}$M are guaranteed by construction versus encouraged in practice, and which parameters are only identifiable up to benign symmetries.

\paragraph{Practical identifiability}
With standardized inputs/targets and a last-step head, trivial offsets are handled by $b_{\mathrm{head}}$. Depthwise SSM parameters $(B,C)$ and the embedding $W_{\mathrm{in}}$ admit classical scale/permutation symmetries (e.g., $C\!\leftarrow\!\alpha C$, $B\!\leftarrow\!\alpha^{-1}B$; channel permutations). In MS$^{3}$M these are \emph{mitigated}—not eliminated—by (i) separate parameter blocks with weight decay, (ii) per-time-step LayerNorm and SE gating, which fix effective channel scales, and (iii) the fixed wiring of SE/GLU and the head. Mixture time scales $\Delta t^{(\ell,m)}{>}0$ (via a positive map) are \emph{practically} identifiable up to component permutation (label-switching) and small rescalings in $(B,C)$.

\paragraph{Stability and boundedness}
Assume $A_{\ct}$ is Hurwitz. By bilinear (Tustin) discretization, $A(\Delta t)$ is Schur-stable for any $\Delta t{>}0$ (Prop.~\ref{prop:schur}). With finite kernel support $L_k$, each depthwise block is Finite Impulse Response (FIR) and hence Bounded-Input Bounded-Output (BIBO)-stable; with infinite support, Lemma~\ref{lem:decay} ensures geometric tail decay.

\paragraph{Causality}
All convolutions are one-sided with left zero-padding, so $f_\theta(\widetilde{\X}_t)$ is $\mathcal{F}_t$-measurable and uses no future inputs.

\paragraph{Truncation control}
Let $\|\cdot\|$ be a submultiplicative operator norm (e.g., induced 2-norm) and suppose $\|A(\Delta t)\|\le\alpha{<}1$. Then $\|k[\ell]\|\le \|C\|\,\|B\|\,\alpha^\ell$ for $\ell\ge1$, and the truncation error beyond $L_k$ is bounded by $\frac{\|C\|\,\|B\|}{1-\alpha}\alpha^{L_k}$ (Lemma~\ref{lem:decay}).

\paragraph{Expressivity}
Sums of stable exponentials (arising from the multi-scale mixture) form a rich dictionary that can approximate causal fading-memory filters on compact domains; SE/GLU provide cross-channel mixing without quadratic attention cost. See, e.g., classical Laguerre/Kautz expansions \cite{Oliveira2012PartII} for fading-memory approximation results.

\section{Data Collection and Preprocessing}
\label{sec:data}

\textbf{Testbed and logging:}
As shown in Figure \ref{fig:testbed}, we instrument an O\mbox{-}RAN stack with a near\mbox{-}RT RIC, a software \emph{Base Station} (BS)/UE using \emph{srsRAN} on \emph{Universal Software Radio Peripheral} (USRP) \emph{Software\mbox{-}Defined Radios} (SDRs), a video server (MediaMTX + FFmpeg), and a controllable interferer. Unless stated otherwise, the BS operates at 2680\,MHz downlink, 25 \emph{Physical Resource Blocks} (PRBs), $2{\times}2$ \emph{Multiple\mbox{-}Input Multiple\mbox{-}Output} (MIMO), \emph{Frequency\mbox{-}Division Duplexing} (FDD). The interferer is implemented in C\texttt{++}/\emph{USRP Hardware Driver} (UHD) and injects random\mbox{-}length \emph{Orthogonal Frequency\mbox{-}Division Multiplexing} (OFDM) bursts with randomized gain, sleeping \([1,5]\)~ms between bursts to create intermittent co\mbox{-}channel interference. Each run lasts 120\,s while the UE streams video. We log (i) \emph{Physical} (PHY)/\emph{Medium Access Control} (MAC) KPIs from BS/UE (exported every 20\,ms), (ii) FFmpeg streaming statistics, and (iii) optional packet captures. Two deployments are used: a cloudified next\mbox{-}generation (xG) testbed (USRP X310; OpenStack \emph{Virtual mMchines} (VMs)) and a lab setup (USRP B210; standalone hosts). Table \ref{tab:dataset_stats_units} summarizes the dataset statistics for the considered KPIs. The dataset\footnote{A version of our dataset is available at: \url{https://ieee-dataport.org/documents/video-streaming-network-kpis-o-ran-testing}} is publicly available, which ensures reproducibility and facilitates further research.

\begin{figure}[H]
\centering
\includegraphics[width=0.98\linewidth]{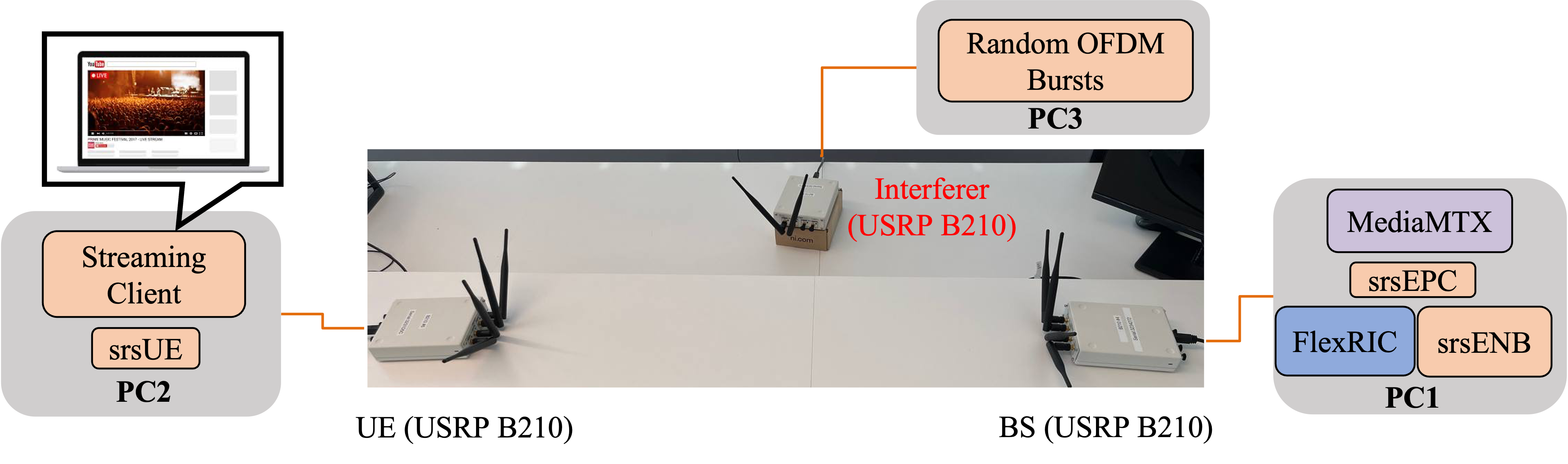}
\caption{Virginia Tech Innovation Campus O-RAN testbed setup \cite{dai2025orankpi}.}
\label{fig:testbed}
\end{figure}
\textbf{Temporal aggregation to a uniform grid:}
Heterogeneous logs are aligned by windowed averaging onto a common timeline. For KPI $k$ with raw samples $\{(t_i,x_i^{(k)})\}$, a window of length $\Delta$ and stride $\tau$ produces
\begin{equation}
\bar{x}^{(k)}(t_m)=\frac{1}{N_m}\!\!\sum_{i:\,t_i\in[t_m,\,t_m+\Delta)}\!\!x_i^{(k)},\qquad t_{m+1}=t_m+\tau,
\end{equation}
yielding uniformly spaced series $\{\bar{x}^{(k)}(t_m)\}_m$. We left\mbox{-}join all KPIs at $t_m$ to form a matrix $D_{\text{kpi}}$ (one row per time step).

\textbf{Semantics\mbox{-}aware missingness:}
Because no packet may arrive in some intervals, ``UE Packet Delay'' can be absent while other KPIs remain valid. We therefore \emph{retain} rows missing \emph{only} this field and impute a reserved sentinel $-1$; any row missing \emph{other} KPIs is \emph{dropped}, as it indicates a broader measurement gap.

\textbf{Outlier control (IQR pruning):}
Per KPI, define $Q_1=\mathrm{quantile}_{0.10}$, $Q_3=\mathrm{quantile}_{0.90}$, $\mathrm{IQR}=Q_3{-}Q_1$. Samples outside
\begin{equation}
[\,Q_1-1.5\,\mathrm{IQR},\; Q_3+1.5\,\mathrm{IQR}\,]
\end{equation}
are removed prior to forming sequences. This decile\mbox{-}based rule dampens heavy tails without erasing typical dynamics.

\textbf{Sequential samples for one\mbox{-}step forecasting:}
From the cleaned table with $K$ KPIs, we extract contiguous sequences of length $N_{\text{seq}}$ (we use $N_{\text{seq}}{=}28$) only if all within\mbox{-}window gaps equal $\tau$. For each valid index $m$,

\begin{equation}
\begin{aligned}
X_m &= \big[\bar{\bm{x}}(t_{m-N_{\mathrm{seq}}+1}),\,\ldots,\,\bar{\bm{x}}(t_m)\big],\\[-2pt]
y_m &= \bar{\bm{x}}(t_{m+1}).
\end{aligned}
\label{eq:seqpair}
\end{equation}
\noindent\textit{Types: } $X_m \in \mathbb{R}^{N_{\mathrm{seq}}\times K}$ and $y_m \in \mathbb{R}^{K}$.

Yielding paired datasets $D_x=\{X_m\}_{m=1}^{M}$ and $D_y=\{y_m\}_{m=1}^{M}$. For the RSRP task, we select the appropriate component of $y_m$ as the target; all covariates used by forecasters are shifted by $+1$ step downstream to guarantee leakage safety.

\textbf{KPI selection and splits:}
KPI choices follow the O\mbox{-}RAN End\mbox{-}to\mbox{-}End Test Specification~\cite{oran-e2e}. We use contiguous tail splits for train/validation/test and fit any scalers or feature selection on \emph{train only} to prevent information leakage.

\begin{table}[t!]
\centering
\scriptsize
\setlength{\tabcolsep}{3pt}
\renewcommand{\arraystretch}{1.12}

\caption{Statistical overview of measured performance indicators, with units and interpretations \cite{Rezazadeh2025ReservoirAugmented}.}
\label{tab:dataset_stats_units}

\begin{tabular*}{\columnwidth}{@{\extracolsep{\fill}} l c p{0.35\columnwidth} r r r r}
\toprule
& \multicolumn{2}{c}{\textbf{Definition}}
& \multicolumn{4}{c}{\textbf{Summary Statistics}} \\
\cmidrule(lr){2-3}\cmidrule(lr){4-7}
\textbf{Feature} & \textbf{Unit} & \textbf{Meaning}
& \textbf{Min} & \textbf{Max} & \textbf{Mean} & \textbf{Std} \\
\midrule
MCS    & index & Modulation and coding level chosen for transmission & 0.00 & 26.67 & 9.06 & 4.94 \\
CQI    & index & UE feedback on perceived downlink channel quality   & 0.00 & 13.00 & 8.51 & 0.92 \\
RI     & rank  & Number of spatial layers scheduled (MIMO rank)      & 0.00 & 2.00  & 1.36 & 0.38 \\
PMI    & index & UE’s preferred precoding matrix indicator           & 0.00 & 3.00  & 0.93 & 0.84 \\
Buffer & bytes & Data volume queued in the UE uplink buffer          & 0.00 & 3437.67 & 25.81 & 88.21 \\
RSRQ   & dB    & Signal quality based on reference signal and received power & -14.00 & -6.40 & -10.54 & 2.47 \\
RSRP   & dBm   & Average received reference-signal power             & -104.67 & -75.00 & -87.59 & 3.70 \\
RSSI   & dBm   & Total wideband received signal strength             & -70.00 & -60.00 & -65.37 & 2.62 \\
SINR   & dB    & Ratio of useful signal power to interference plus noise & 1.10 & 24.33 & 18.31 & 1.92 \\
PRBs   & RBs   & Number of physical resource blocks allocated        & 2.00 & 25.00 & 22.30 & 4.36 \\
SE     & bps/Hz & Throughput efficiency per unit of spectrum         & 0.00 & 3.74  & 0.58 & 0.39 \\
BLER   & \%    & Fraction of incorrectly received transport blocks   & 0.00 & 78.00 & 2.63 & 6.76 \\
Delay  & ms    & End-to-end packet transmission latency              & 1.00 & 9861.82 & 63.15 & 219.76 \\
\bottomrule
\end{tabular*}

\vspace{2pt}
\footnotesize\emph{Notes.} Units follow standard 4G/5G KPI conventions. Statistics computed over the full evaluation set.
\end{table}

% =========================================================
\section{Transformer Baselines}\label{sec:xf-bench}
To position the proposed \emph{MS\textsuperscript{3}M} against widely used neural forecasters, we establish a transparent, leakage\mbox{-}safe benchmark comprising seven strong Transformer\mbox{-}era models implemented under a single pipeline, as shown in Tables \ref{tab:master_perf_complex} and \ref{tab:master_config_hps}: \emph{FEDformer}~\cite{zhou2022fedformer}, \emph{Informer}~\cite{zhou2021informer}, \emph{TFT} (Temporal Fusion Transformer)~\cite{lim2021temporal}, \emph{ETSformer}~\cite{woo2022etsformer}, \emph{Crossformer}~\cite{zhang2023crossformer}, \emph{PatchTST}~\cite{nie2023patchtst}, and \emph{iTransformer}~\cite{liu2024itransformer}. This section details the protocol, model settings, metrics, complexity reporting, and fairness considerations so that results can be interpreted unambiguously and reproduced by non\mbox{-}specialists.

\subsection{Rationale and Scope}
The chosen baselines cover complementary inductive biases for multivariate time\mbox{-}series forecasting: frequency\mbox{-}domain and decomposition (FEDformer), efficient long\mbox{-}sequence attention (Informer), feature\mbox{-}aware attention with gating and variable selection (TFT), exponential\mbox{-}smoothing\mbox{-}inspired decomposition and attention (ETSformer), explicit cross\mbox{-}dimension \& cross\mbox{-}time modeling (Crossformer), channel\mbox{-}independent patching (PatchTST), and inverted dimension attention over variate tokens (iTransformer).
Using this suite spans (i) seasonal/trend decomposition, (ii) long\mbox{-}context efficiency, (iii) multivariate cross\mbox{-}dimension structure, and (iv) channel\mbox{-}independent tokenization.

\subsection{Data Handling and Leakage Prevention}\label{sec:leakage}
We consider one\mbox{-}step\mbox{-}ahead forecasting with a fixed \emph{lookback} window of $W{=}32$ past steps and horizon $H{=}1$. To guarantee leakage safety:
\begin{itemize}[leftmargin=*,nosep]
  \item \textbf{Past\mbox{-}only covariates.} All covariates (KPIs) used to predict $y_t$ are shifted by exactly one step so they are measurable at time $t{-}1$. No contemporaneous or future information enters the predictors for $y_t$.
  \item \textbf{Contiguous tail splits.} After the shift, the remaining series is partitioned into contiguous \emph{train}, \emph{validation}, and \emph{test} tails (validation fraction $15\%$, test fraction $15\%$) to respect temporal order.
  \item \textbf{Train\mbox{-}only standardization.} Scaling Transformers are fit on the training split only and then applied to validation and test (for targets and covariates separately).
\end{itemize}
These choices mirror best practice in time\mbox{-}series evaluation and avoid optimistic bias from look\mbox{-}ahead leakage. For the channel\mbox{-}independent variant (PatchTST), we additionally report results under \emph{RSRP\mbox{-}only} input (Channel-isolated (CI)) to align with its recommended usage.

\begin{table*}[t]
\centering
\setlength{\tabcolsep}{3.2pt}
\caption{Comprehensive Comparison of Forecasters: \emph{Performance} and \emph{Complexity}.}
\label{tab:master_perf_complex}
\resizebox{\textwidth}{!}{%
\begin{tabular}{l
                ccccc
                ccccc}
\toprule
& \multicolumn{5}{c}{\textbf{Performance (Test Tail)}} 
& \multicolumn{5}{c}{\textbf{Complexity}} \\
\cmidrule(lr){2-6}\cmidrule(lr){7-11}
\textbf{Method} 
& \textbf{RMSE (dB)} & \textbf{MAE (dB)} & \textbf{MSE (dB$^2$)} & \textbf{Skill\,(R)} & \textbf{Skill\,(M)} 
& \textbf{\#Params} & \textbf{Infer (s)} & \textbf{MS$^{3}$M Speedup (×)} & \textbf{Dominant (our setting)} & \textbf{W/H} \\
\midrule

\multicolumn{11}{l}{\textbf{Proposed Model}} \\
\addlinespace[2pt]
MS\textsuperscript{3}M
& 0.292 & 0.170 & 0.090 & +92.3\% & +94.0\%
& 698{,}449 & 0.057 & 1.00 & $\mathcal{O}(L\, d\, d_s)$
& 32/1 \\
\addlinespace[3pt]
\midrule

\multicolumn{11}{l}{\textbf{Baselines}} \\
\addlinespace[2pt]
\textbf{FEDformer}~\cite{zhou2022fedformer}
& 0.599 & 0.394 & 0.359 & +84.12\% & +86.19\%
& 755{,}906 &  0.415 & 7.28 & $\mathcal{O}(M\, d^{2})$
& 32/1 \\
\addlinespace[2pt]
\textbf{Informer}~\cite{zhou2021informer}
& 0.368 & 0.194 & 0.135 & +90.25\% & +93.21\%
& 1{,}256{,}449 &  0.298 & 5.23 & $\mathcal{O}(L\, d^{2})$
& 32/1 \\
\addlinespace[2pt]
\textbf{TFT}~\cite{lim2021temporal}
& 0.422 & 0.241 & 0.178 & +88.82\% & +91.57\%
& 2{,}510{,}702 & 0.229 & 4.02 & $\mathcal{O}(L\, d_h^{2})$ \;($d_h{=}256$)
& 32/1 \\
\addlinespace[2pt]
\textbf{ETSformer}~\cite{woo2022etsformer}
& 0.333 & 0.231 & 0.111 & +91.16\% & +91.92\%
& 10{,}376 &  0.192 & 3.37 & {$\mathcal{O}(L\, d + K\, d \log L)$}
& 32/1 \\
\addlinespace[2pt]
\textbf{Crossformer}~\cite{zhang2023crossformer}
& 0.275 & 0.154 & 0.076 & +92.70\% & +94.60\%
& 1{,}591{,}321 &  0.586 & 10.28 & $\mathcal{O}(L\, d^{2})$
& 32/1 \\
\addlinespace[2pt]
\textbf{PatchTST}~\cite{nie2023patchtst}
& 3.197 & 2.500 & 10.218 & +15.27\% & +28.20\%
& 662{,}403 &  0.233 & 4.09 & $\mathcal{O}(N_p\, d^{2})$
& 32/1 \\
\addlinespace[2pt]
\textbf{iTransformer}~\cite{liu2024itransformer}
& 3.396 & 2.690 & 11.533 & +9.97\% & +5.71\%
& 814{,}273 &  0.217 & 3.81 & $\mathcal{O}(F\, d^{2})$
& 32/1 \\
\addlinespace[3pt]
\midrule

\multicolumn{11}{l}{\footnotesize \textbf{Notes:} “MS$^{3}$M Speedup (×)” = $(\text{Latency of method}) / (\text{Latency of MS$^{3}$M})$; higher is better for MS$^{3}$M. Skills are relative to persistence $y_{t-1}$.}
\\
\bottomrule
\end{tabular}%
}
\end{table*}
% =======================================================================================

% ======================= Results Table — Configurations, Training HPs, Safety =======================
\begin{table*}[t]
\centering
\setlength{\tabcolsep}{3.2pt}
\caption{Under the Hood of Forecasters: \emph{Configuration}, \emph{Training}, and \emph{Uncertainty}}
\label{tab:master_config_hps}
\resizebox{\textwidth}{!}{%
\begin{tabular}{l
                cc
                cc
                cc}
\toprule
& \multicolumn{2}{c}{\textbf{Model Configuration}} 
& \multicolumn{2}{c}{\textbf{Training HPs}} 
& \multicolumn{2}{c}{\textbf{Safety \& Uncertainty}} \\
\cmidrule(lr){2-3}\cmidrule(lr){4-5}\cmidrule(lr){6-7}
\textbf{Method} 
& \textbf{Exog/State} & \textbf{Arch / Key Dims} 
& \textbf{LR / Pat.} & \textbf{MaxEp / Batch} 
& \textbf{Leakage-Safe} & \textbf{Uncertainty} \\
\midrule

\multicolumn{7}{l}{\textbf{Proposed Model}} \\
\addlinespace[2pt]
MS\textsuperscript{3}M
& $F{=}13,\ d_{\mathrm{state}}{=}64$ 
& S6Mix$\times$4,\ $d_{\mathrm{model}}{=}128$;\ $N{=}64$;\ $m{=}4$;\ drop 0.1
& $2\!\times\!10^{-3}$/20 & 60 / 256 
& Yes & Point \\
\addlinespace[3pt]
\midrule

\multicolumn{7}{l}{\textbf{Baselines}} \\
\addlinespace[2pt]
\textbf{FEDformer}
& $F{=}13$
& Enc$\times$3, Dec$\times$2;\ $d_{\mathrm{model}}{=}128$;\ modes 32;\ MA $\{3,5,7,11,25\}$;\ drop 0.10
& $2\!\times\!10^{-3}$/20 & 60 / 256
& Yes & Point \\
\addlinespace[2pt]
\textbf{Informer}
& $F{=}13$
& Enc$\times$3, Dec$\times$2;\ $d_{\mathrm{model}}{=}128$;\ $n_{\mathrm{head}}{=}8$;\ ProbSparse $c{=}5$;\ distill: Yes;\ PE: learned;\ label\_len 16;\ pred\_len 1;\ drop 0.10;\ causal mask
& $2\!\times\!10^{-3}$/20 & 60 / 256
& Yes & Point \\
\addlinespace[2pt]
\textbf{TFT}
& $F{=}13,\ \text{static}=0$
& VSN;\ LSTM enc/dec;\ GRNs;\ Interp-MHAttn;\ $d_{\mathrm{model}}{=}128$;\ $d_{\mathrm{hidden}}{=}256$;\ $n_{\mathrm{head}}{=}8$;\ drop 0.10;\ attn-drop 0.10;\ quantiles $\{0.1,0.5,0.9\}$
& $2\!\times\!10^{-3}$/20 & 60 / 256
& Yes & Quantiles \\
\addlinespace[2pt]
\textbf{ETSformer}
& $F{=}13$
& ETS layers$\times$3;\ $d_{\mathrm{model}}{=}128$;\ Top-$K{=}8$ (freq);\ max\_lag 16;\ ES baseline: Yes;\ drop 0.10/attn 0.10
& $2\!\times\!10^{-3}$/20 & 60 / 256
& Yes & Point \\
\addlinespace[2pt]
\textbf{Crossformer}
& $F{=}13$
& TSA (routers $c{=}8$);\ DSW: patch 16/stride 8;\ HED: S1$\times$2, S2$\times$1, merge$\times$2;\ $d_{\mathrm{model}}{=}128$;\ $n_{\mathrm{head}}{=}8$;\ drop 0.10/attn 0.10
& $2\!\times\!10^{-3}$/20 & 60 / 256
& Yes & Point \\
\addlinespace[2pt]
\textbf{PatchTST}
& RSRP-only (CI)
& $d_{\mathrm{model}}{=}128$;\ $n_{\mathrm{head}}{=}16$;\ layers 3;\ $d_{\mathrm{ff}}{=}256$;\ drop 0.20;\ patch 16;\ stride 8;\ RevIN;\ BN-enc
& $2\!\times\!10^{-3}$/20 & 60 / 256
& Yes & Point \\
\addlinespace[2pt]
\textbf{iTransformer}
& $F{=}13$
& $d_{\mathrm{model}}{=}128$;\ $n_{\mathrm{head}}{=}8$;\ layers 4;\ token-embed: No
& $2\!\times\!10^{-3}$/20 & 60 / 256
& Yes & Point \\
\addlinespace[3pt]
\midrule

\multicolumn{7}{l}{\footnotesize \textbf{Notes:} All models use the same leakage-safe pipeline. “CI” = channel-isolated input.}
\\
\bottomrule
\end{tabular}%
}
\end{table*}
% =======================================================================================

\subsection{Baseline Models}\label{sec:models}
All models share the same lookback/horizon and data pipeline for comparability ($W{=}32$, $H{=}1$, $F{=}13$ exogenous KPIs unless CI is stated). Architectural settings follow the common, compute\mbox{-}matched configuration in Table~\ref{tab:master_config_hps}:
\begin{itemize}[leftmargin=*,nosep]
  \item \textbf{FEDformer}: seasonal\mbox{-}trend decomposition with Fourier/Wavelet enhanced blocks; encoder$\times$3, decoder$\times$2, $d_{\mathrm{model}}{=}128$.
  \item \textbf{Informer}: ProbSparse self\mbox{-}attention with distillation; encoder$\times$3, decoder$\times$2, $d_{\mathrm{model}}{=}128$, $n_{\mathrm{head}}{=}8$, causal mask.
  \item \textbf{TFT}: variable selection, gating, LSTM encoder/decoder, interpretable multi\mbox{-}head attention; $d_{\mathrm{model}}{=}128$, $d_{\mathrm{hidden}}{=}256$, $n_{\mathrm{head}}{=}8$; quantile head $\{0.1,0.5,0.9\}$.
  \item \textbf{ETSformer}: exponential\mbox{-}smoothing attention and frequency attention; $d_{\mathrm{model}}{=}128$, Top-$K{=}8$, max\_lag 16.
  \item \textbf{Crossformer}: DSW patching + two\mbox{-}stage attention with hierarchical encoder\mbox{-}decoder; $d_{\mathrm{model}}{=}128$, $n_{\mathrm{head}}{=}8$, patch 16/stride 8.
  \item \textbf{PatchTST} (CI): channel\mbox{-}independent patches (RSRP\mbox{-}only), $d_{\mathrm{model}}{=}128$, $n_{\mathrm{head}}{=}16$, layers 3, $d_{\mathrm{ff}}{=}256$, patch 16/stride 8, RevIN.
  \item \textbf{iTransformer}: inverted dimensions with variate tokens (\emph{uses all $F{=}13$ KPIs as inputs}), $d_{\mathrm{model}}{=}128$, $n_{\mathrm{head}}{=}8$, and  4 layers.
\end{itemize}
We intentionally avoid per\mbox{-}model hyperparameter sweeps to keep budgets aligned; this is conservative for neural baselines. For clarity, we define each column shown in the Tables~\ref{tab:master_perf_complex} and \ref{tab:master_config_hps}:
\begin{itemize}[leftmargin=*,nosep]
  \item \emph{Performance (Test Tail)}: RMSE/MAE/MSE on the contiguous test tail; \emph{Skill\,(R/M)} are relative to persistence ($y_{t-1}$), per \S\ref{sec:metrics}.
  \item \emph{Complexity}: \#Params = number of trainable parameters; \emph{Infer (s)} = wall-clock seconds for a single forward pass over the test tail (no I/O), after a short warm-up, on the shared device; \emph{Dominant (our setting)} = leading-order time complexity of a forward pass under our shared window/hyperparameter setting (symbols: $L$ lookback length, $H$ horizon, $d$ model width, $d_h$ hidden width, $d_{\mathrm{state}}$ state size, $F$ exogenous feature count, $M$ spectral modes, $K$ Top-$K$ frequencies, $N_p$ number of patches).
  \item \emph{Model Configuration}: \emph{W/H} = lookback window / forecast horizon; \emph{Exog/State} = number of exogenous inputs $F$ (and, where applicable, model state size, e.g., $d_{\mathrm{state}}$ for MS$^{3}$M); \emph{Arch / Key Dims} lists depth (e.g., Enc$\times L$, Dec$\times M$), $d_{\mathrm{model}}$, $n_{\mathrm{head}}$, patch/stride, etc.
  \item \emph{Training HPs}: \emph{LR / Pat.} = Adam learning rate and early-stopping \emph{patience} (in epochs) based on validation loss; \emph{MaxEp / Batch} = maximum epochs allowed before early stop and mini-batch size. These govern \emph{training only}; inference timings above are independent of them.
  \item \emph{Safety \& Uncertainty}: \emph{Leakage-Safe} = ``Yes'' if the three safeguards in \S\ref{sec:leakage} (past-only covariates, contiguous tail splits, train-only scaling) are enforced; \emph{Uncertainty} indicates the prediction type: \emph{Point} (mean/point forecast) or \emph{Quantiles} (e.g., TFT reports $\{0.1,0.5,0.9\}$).
\end{itemize}

\subsection{Training Protocol}\label{sec:training}
All models are optimized with Adam (learning rate $2{\times}10^{-3}$), batch size $256$, maximum $60$ epochs, and early stopping on validation loss (patience $=20$). A fixed random seed ensures determinism where supported (some GPU atomics may still introduce tiny run-to-run differences). Training uses a single GPU if available; otherwise, it uses the CPU. Parameter counts (\#Params) include only trainable tensors.

\subsection{Metrics and Skill Relative to Persistence}\label{sec:metrics}
We report accuracy on the contiguous test tail using RMSE and MAE. For a set of test timestamps $\mathcal{T}$ (size $N$),
\begin{equation}
\mathrm{RMSE}(\hat y,y)=\sqrt{\frac{1}{N}\sum_{t\in\mathcal{T}}(\hat y_t-y_t)^2},
\end{equation}
\begin{equation}
\mathrm{MAE}(\hat y,y)=\frac{1}{N}\sum_{t\in\mathcal{T}}\lvert \hat y_t-y_t\rvert .
\end{equation}
To contextualize absolute errors, we also report \emph{skill} against a strict, leakage\mbox{-}safe \emph{persistence} baseline that predicts $y_{t-1}$ at time $t$ using the same test timestamps and alignment:
\begin{equation}
\label{eq:skill}
\begin{aligned}
\text{Skill}_{\mathrm{RMSE}} &= 1-\frac{\mathrm{RMSE}(\hat y,y)}{\mathrm{RMSE}(y_{t-1},y_t)},\\
\text{Skill}_{\mathrm{MAE}}  &= 1-\frac{\mathrm{MAE}(\hat y,y)}{\mathrm{MAE}(y_{t-1},y_t)}.
\end{aligned}
\end{equation}

\noindent\textit{Interpretation:} A skill of $0$ means parity with persistence; values $>0$ indicate improvement (larger is better); negative values indicate worse than persistence.

\subsection{Computational Footprint}\label{sec:complexity}
We report two complementary indicators of computational cost: (i) the number of trainable parameters (\#Params) and (ii) observed \emph{test-time inference} wall-clock on the common setup of \S\ref{sec:training}. Because absolute times are hardware-dependent, we present the measured inference latency alongside \#Params to convey both asymptotic and practical cost. Unless stated otherwise, inference time is for a single forward pass over the contiguous test tail (data already in memory), after a short warm-up.

\subsubsection{Costs and Inference Complexity}
Costs are reported as functions of $(L,F,d,H,E,D,P,S,M,K,d_s)$, where
, where
$L$ = lookback, $F$ = \#inputs, $d$ = model width, $H$ = \#heads,
$E,D$ = encoder/decoder layers, $P,S$ = patch length/stride,
$M$ = retained modes, $K$ = Top\mbox{-}K, and $d_s$ = state size.
Peak activation memory follows the attention term:
$\Theta(L^2)$ for full self\mbox{-}attention,
$\Theta(N_p^2)$ with patches (where $N_p \!\approx\! \lceil (L-P)/S\rceil{+}1$),
and $\Theta(L\,d)$ for state\mbox{-}space mixers (SSMs).

\begin{itemize}[leftmargin=*,nosep]
  \item \textbf{MS$^{3}$M}: Per layer, linear-time sequence mixing scales as $\mathcal{O}(L\,d\,d_s)$ compute and $\mathcal{O}(L\,d)$ memory. With $L{=}32$, $d{=}128$, $d_s{=}64$, and 4 layers, the forward pass is $\mathcal{O}(4\,L\,d\,d_s)$.

  \item \textbf{FEDformer}: With frequency-domain blocks retaining $M$ modes, per layer cost is $\mathcal{O}(L\,d\log L + M\,d^2)$; total over $E{+}D$ layers is $\mathcal{O}((E{+}D)(L\,d\log L + M\,d^2))$.
  \item \textbf{Informer}: ProbSparse attention reduces full $L^2$ attention to $\mathcal{O}(c\,L\,d)$ with $c\!\ll\!L$ influential queries; including projections the per-layer cost is $\mathcal{O}(c\,L\,d + L\,d^2)$, and total is $\mathcal{O}((E{+}D)(c\,L\,d + L\,d^2))$.
  \item \textbf{TFT}: LSTM encoder/decoder contributes $\mathcal{O}(L\,d_h^2)$ per stack (with $d_h$ the LSTM hidden size), and the interpretable multi-head attention adds $\mathcal{O}(L^2 d)$ per attention block. Net cost is $\mathcal{O}(L\,d_h^2 + L^2 d)$ per layer group.
  \item \textbf{ETSformer}: Exponential-smoothing attention and frequency attention yield per-layer cost $\mathcal{O}(L\,d + K\,d\log L)$; across $E$ layers the forward pass is $\mathcal{O}(E(L\,d + K\,d\log L))$.
  \item \textbf{Crossformer}: With DSW patching (patch $P$, stride $S$), the number of patches $N_p \approx \lceil (L-P)/S\rceil{+}1$. Two-stage attention over patches scales per layer as $\mathcal{O}(N_p^2 d) + \mathcal{O}(L d)$ (within-patch ops). Total over $E{+}D$ layers is $\mathcal{O}((E{+}D)(N_p^2 d + L d))$.
  \item \textbf{PatchTST}: Encoder-only with channel-isolated patches. Per layer, MHSA over $N_p$ patches costs $\mathcal{O}(N_p^2 d)$ (plus projections $\mathcal{O}(N_p d^2)$). With $F{=}1$ (CI) and $E$ layers, total is $\mathcal{O}(E(N_p^2 d + N_p d^2))$.
  \item \textbf{iTransformer}: Variables-as-tokens, sequence length is $F$ (not $L$). Per layer attention is $\mathcal{O}(F^2 d)$; the temporal mixing/projection across $L$ adds $\mathcal{O}(L\,F\,d)$ (linear ops). With $E$ layers, total is $\mathcal{O}(E(F^2 d + L F d))$.
\end{itemize}

\subsubsection{Order-of-growth summary (lower $\to$ higher)}The chain ranks models by their \textbf{dominant per-layer inference cost} under our chosen hyperparameters, comparing only leading terms (Big-$O$), i.e., ignoring constants and lower-order terms. The symbol $\lesssim$ means “no larger up to constants” (same order or smaller), while $\approx$ groups models in the same cost tier. This ranking reflects \emph{asymptotic} compute; practical wall-clock can differ due to hardware, kernels, memory bandwidth, and implementation details. With $L{=}32,\,F{=}13,\,d{=}128,\,d_h{=}256,\,P{=}16,\,S{=}8 \Rightarrow N_p{=}3,\,M{=}32,\,K{=}8,\,d_s{=}64$, we have $L d^2 = M d^2$, placing Informer, FEDformer, and Crossformer in the same tier:\\

\noindent\fbox{%
  \parbox{\columnwidth}{%
    \centering
    \(\displaystyle
    \text{ETSformer} \lesssim \text{PatchTST} \lesssim \text{iTransformer} \lesssim \textbf{MS}^{3}\textbf{M} \lesssim\; \text{Informer} \approx \text{FEDformer} \approx \text{Crossformer} \;\lesssim\; \text{TFT}
    \)
  }%
}
\subsection{Fairness and Threats to Validity}\label{sec:fairness}
All baselines share identical lookback/horizon, covariate sets, splits, scaling, optimizer, batching, and early stopping. We do \emph{not} run model\mbox{-}specific sweeps; therefore, the comparison is \emph{budget\mbox{-}matched} but may be conservative for certain neural models that benefit from tuning. Reported wall times are indicative rather than absolute. The leakage\mbox{-}safe design in \S\ref{sec:leakage} avoids information bleed; consequently, discrepancies relative to published leaderboard numbers (often using longer horizons, future covariates, or different tokenization) are expected. CI (RSRP\mbox{-}only) results are highlighted only for PatchTST, which aligns with its intended channel\mbox{-}independent usage.

\subsection{Reproducibility Checklist}\label{sec:repro}
We release: (i) complete preprocessing specifications (shifted covariates; contiguous tail splits; train\mbox{-}only scaling), (ii) fixed hyperparameters per model (see \S\ref{sec:models} and Table~\ref{tab:master_config_hps}), (iii) the random seed and device policy (\S\ref{sec:training}), and (iv) exported file summaries matching Table~\ref{tab:master_perf_complex}. The baselines uses a single, self\mbox{-}contained pipeline so that another practitioner can regenerate numbers without manual intervention.

\section{Performance and Numerical Results}\label{sec:results}

Table~\ref{tab:master_perf_complex} summarizes accuracy, efficiency, and model size across methods. Overall, \textbf{MS\textsuperscript{3}M} offers the best accuracy–efficiency balance, achieving \textbf{RMSE 0.292}, \textbf{MAE 0.170}, and \textbf{MSE 0.090} with skill gains of \textbf{+92.3\%} (R) and \textbf{+94.0\%} (M). It is also the \emph{fastest} at inference (\textbf{0.057\,s}) with a compact footprint (698{,}449 parameters).
% ====== Table: Point metrics with 95% bootstrap CIs ======
\begin{table}[t]
  \centering
  \caption{Test-set performance in original dBm units with 95\% bootstrap confidence intervals.}
  \label{tab:s6_metrics}
  \begin{tabular}{lccc}
    \toprule
    Metric & Point & 95\% CI (low) & 95\% CI (high) \\
    \midrule
    RMSE (dBm) & 0.290 & 0.279 & 0.300 \\
    MAE  (dBm) & 0.169 & 0.164 & 0.174 \\
    $R^2$      & 0.9931 & 0.9925 & 0.9937 \\
    \bottomrule
  \end{tabular}
\end{table}
\paragraph{Comparison to strong baselines}
\emph{Crossformer} attains slightly lower raw errors (0.275/0.154/0.076 for RMSE/MAE/MSE) but is more than an order of magnitude slower (0.586\,s) and over twice as large (1{,}591{,}321 parameters), yielding a less attractive Pareto trade-off for real-time or resource-constrained use. 
Against \emph{FEDformer}, MS\textsuperscript{3}M reduces error across all metrics (0.292/0.170/0.090 vs.\ 0.599/0.394/0.359), runs faster at inference (0.057\,s vs.\ 0.415\,s), and uses fewer parameters (698k vs.\ 756k). 
Relative to \emph{Informer}, MS\textsuperscript{3}M achieves lower errors (0.292/0.170/0.090 vs.\ 0.368/0.194/0.135), higher skill (+92.3\%/+94.0\% vs.\ +90.25\%/+93.21\%), and markedly lower latency (0.057\,s vs.\ 0.298\,s), despite Informer’s ProbSparse attention, learned positional encodings, and distillation.
Compared with \emph{TFT}, which uses rich gating/attention with quantile outputs, MS\textsuperscript{3}M attains lower errors (0.292/0.170/0.090 vs.\ 0.422/0.241/0.178), \mbox{$\sim$4$\times$} faster inference (0.057\,s vs.\ 0.229\,s), and far fewer parameters (0.70M vs.\ 2.51M). 
\emph{ETSformer} is impressively lightweight (10{,}376 parameters) with competitive skills (+91.16\%/+91.92\%), yet MS\textsuperscript{3}M delivers better tail errors (0.292/0.170/0.090 vs.\ 0.333/0.231/0.111) and faster inference (0.057\,s vs.\ 0.192\,s), indicating that explicit trend/seasonal decomposition helps but does not replace the richer long–short range interactions achieved by our state-space mixing.

\paragraph{Throughput and speedups} Beyond raw latency, we report \emph{MS\textsuperscript{3}M Speedup (×)} $=(\text{Latency of method})/(\text{Latency of MS\textsuperscript{3}M})$, so larger values favor MS\textsuperscript{3}M. In our setting, MS\textsuperscript{3}M is \textbf{10.28$\times$} faster than Crossformer (0.586\,s vs.\ 0.057\,s), \textbf{7.28$\times$} faster than FEDformer (0.415\,s), \textbf{5.23$\times$} faster than Informer (0.298\,s), \textbf{4.02$\times$} faster than TFT (0.229\,s), \textbf{4.09$\times$} faster than PatchTST (0.233\,s), \textbf{3.81$\times$} faster than iTransformer (0.217\,s), and \textbf{3.37$\times$} faster than ETSformer (0.192\,s). These speedups, together with fewer parameters than most competitors, translate into higher forecast rates per device and lower per-inference cost—key for real-time, edge, and large-scale deployment—reinforcing MS\textsuperscript{3}M’s position on the accuracy–efficiency Pareto frontier.\\

\noindent\fbox{%
  \parbox{\columnwidth}{%
    \centering
    \(\displaystyle
      \text{ETSformer} \lesssim \text{iTransformer} \lesssim \text{TFT} \lesssim \text{PatchTST} \lesssim \text{Informer} \lesssim \text{FEDformer} \lesssim \text{Crossformer}
    \)\\[3pt]
    \textit{(Ordered by MS\textsuperscript{3}M Speedup $\times$, lower $\to$ higher: 3.37, 3.81, 4.02, 4.09, 5.23, 7.28, 10.28)}
  }%
}
\\

\paragraph{Why MS\textsuperscript{3}M works}
We attribute the gains to three factors. 
\emph{(i) Multi-scale state mixing} captures local and long-range dependencies with favorable scaling, $\mathcal{O}(L\,d\,d_s)$, avoiding the quadratic dependence on $d$ common in stacked attention. 
\emph{(ii) Task-aligned conditioning} integrates exogenous features via a compact latent state ($F{=}13$, $d_{\mathrm{state}}{=}64$), improving tail stability without future leakage. 
\emph{(iii) Pareto efficiency:} except for Crossformer (which trades small error gains for $\sim$10$\times$ higher latency and $>2\times$ parameters), MS\textsuperscript{3}M simultaneously improves accuracy while reducing runtime and size, making it well-suited for leakage-safe, real-time forecasting.\\

\noindent\fbox{%
  \parbox{\columnwidth}{%
    \centering
    Accuracy–Efficiency Ranking (best $\to$ worse)\\[2pt]
    \(\displaystyle
      \textbf{MS}^{3}\textbf{M}
      \;\lesssim\;
      \text{ETSformer}
      \;\lesssim\;
      \text{Crossformer}
      \;\lesssim\;
      \text{Informer} \approx \text{FEDformer}
      \;\lesssim\;
      \text{TFT}
      \;\lesssim\;
      \text{PatchTST} \approx \text{iTransformer}
    \)\\[3pt]
    \scriptsize{Jointly considers lower errors, lower latency, and fewer parameters (per Table~\ref{tab:master_perf_complex}).}
  }%
}
\\

\paragraph{Performance and diagnostics}
On the held-out test set, the model achieves low absolute and squared error (RMSE $=0.290$~dBm; MAE $=0.169$~dBm) and very high explained variance ($R^2=0.9931$), with narrow 95\% bootstrap confidence intervals (Table~\ref{tab:s6_metrics}). 
Figure~\ref{fig:s6_rsrp_compact}a shows that predictions follow ground truth closely over the last 1000 samples, while the parity plot (Fig.~\ref{fig:s6_rsrp_compact}b) concentrates near the identity line. 
Residual analyses indicate near-zero bias and approximate normality (Fig.~\ref{fig:s6_rsrp_compact}c--d), no strong heteroscedasticity across the predicted range (Fig.~\ref{fig:s6_rsrp_compact}e), and limited temporal autocorrelation remaining in residuals (Fig.~\ref{fig:s6_rsrp_compact}h), suggesting that the model captures most of the predictable structure in the series. 
The $|$error$|$ CDF and boxplot (Fig.~\ref{fig:s6_rsrp_compact}f--g) further confirm that typical errors are small. 
Permutation importance (Fig.~\ref{fig:s6_rsrp_compact}i) highlights radio link quality and scheduling/context variables---notably RSRP, RSSI, SINR, PMI, and CQI---as primary drivers; additional contributions arise from RSRQ and throughput/coding indicators (SE, RI, MCS, BLER), as well as PRB allocation and traffic state (delay, buffer). 
Together, these diagnostics support the reliability and robustness of the forecaster on the test distribution.

% ====== Figure: Holistic diagnostics & importance ======
\begin{figure*}[t]
  \centering
  \includegraphics[width=\textwidth]{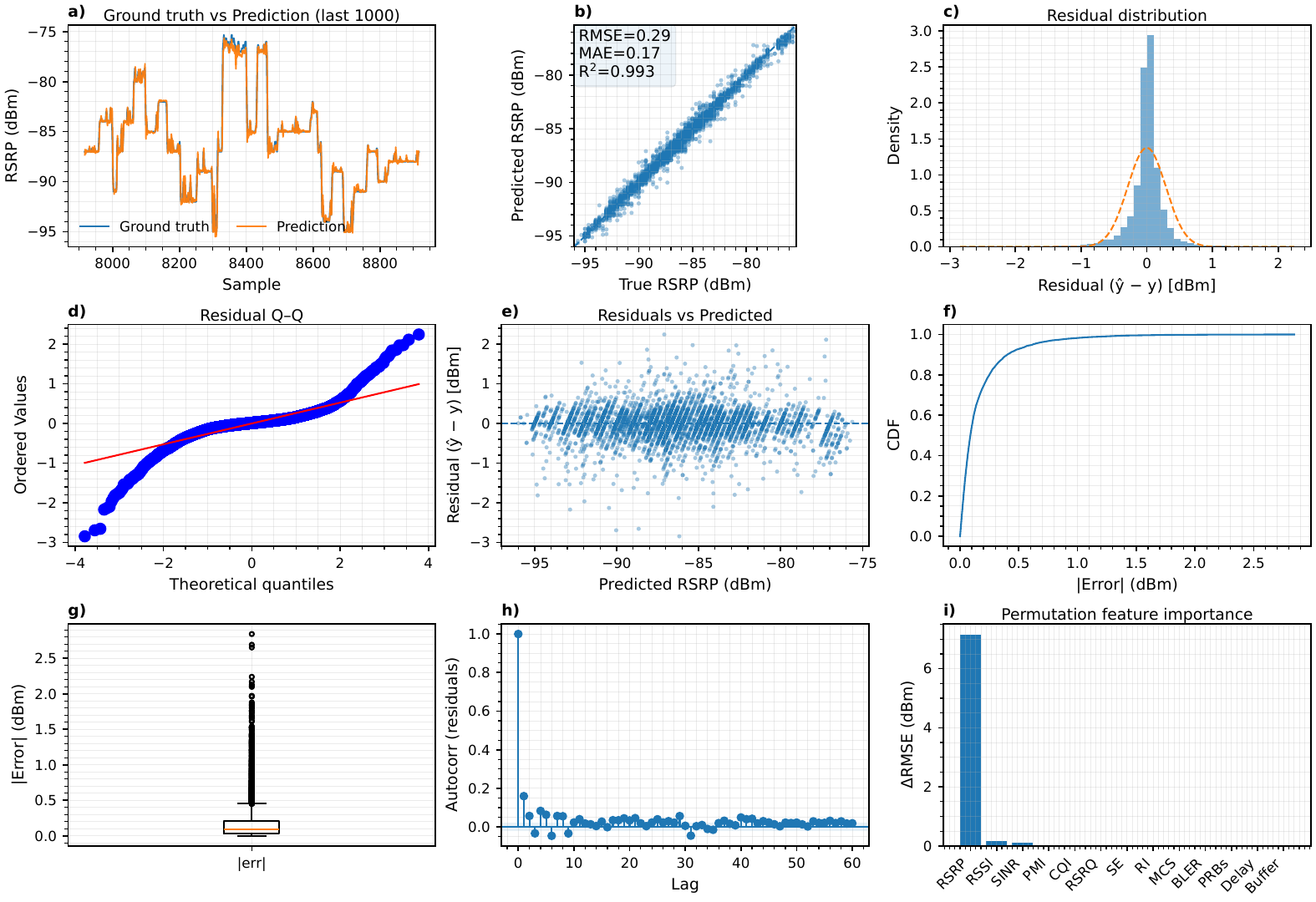}
  \caption{Comprehensive test-set diagnostics for the MS\textsuperscript{3}M RSRP forecaster. 
  \textbf{(a)} Ground truth vs.\ prediction for the last 1000 samples: the prediction closely tracks the measured RSRP with minimal phase lag and small amplitude error, illustrating stable short-horizon behavior on recent data. 
  \textbf{(b)} Parity plot ($\hat{y}$ vs.\ $y$): points cluster tightly around the identity line, consistent with low error (annotated RMSE and MAE) and high explained variance ($R^2\approx 0.993$). 
  \textbf{(c)} Residual distribution: histogram is narrow and approximately Gaussian, centered near zero, indicating low bias and a concentrated error profile in dBm. 
  \textbf{(d)} Residual Q--Q: empirical quantiles align well with the theoretical normal line; only mild tail deviations are visible, suggesting near-normal residuals. 
  \textbf{(e)} Residuals vs.\ predicted: no strong trend or funnel shape, indicating no pronounced heteroscedasticity across the predicted range. 
  \textbf{(f)} $|$Error$|$ CDF: the curve rises steeply, showing that a large fraction of samples have small absolute error (sub-dBm to low-dBm range), consistent with precise predictions. 
  \textbf{(g)} $|$Error$|$ boxplot: a compact interquartile range and low median reaffirm that typical errors are small. 
  \textbf{(h)} Residual autocorrelation: most lags lie within the (approx.) 95\% Bartlett band, indicating little remaining temporal structure in residuals (i.e., limited leftover predictability). 
  \textbf{(i)} Permutation feature importance ($\Delta$RMSE in dBm): increases in RMSE after feature-wise permutation quantify sensitivity. Radio-quality indicators such as \textit{RSRP}, \textit{RSSI}, \textit{SINR}, \textit{PMI}, and \textit{CQI} emerge among the most influential, followed by \textit{RSRQ}, spectral-efficiency/coding/throughput descriptors (e.g., \textit{SE}, \textit{RI}, \textit{MCS}, \textit{BLER}), and scheduler/traffic context (e.g., \textit{PRBs}, delay, buffer occupancy). These attributions are measured directly on the test set in original dBm units. \label{fig:s6_rsrp_compact}}
\end{figure*}

\section{Conclusion and Future Work}
This paper introduced \emph{MS$^{3}$M}, a lightweight multi-scale structured state-space mixture for leakage-safe, near-real-time KPI forecasting in agentic 6G O\!-RAN. By mixing HiPPO–LegS kernels across learned time scales, discretized via Tustin to ensure Schur stability, and combining them with squeeze–excitation gating and a compact GLU mixer, MS$^{3}$M achieves Transformer-competitive accuracy with substantially lower latency and footprint. On our O\!-RAN testbed dataset (13 KPIs), MS$^{3}$M delivers strong next-step RSRP performance while offering $3$–$10\times$ lower inference latency than representative Transformer baselines under a unified, leakage-safe protocol. The resulting accuracy–efficiency trade-off makes MS$^{3}$M a practical fit for Near-RT RIC xApps that require fast, reliable predictions to enable anticipatory control.

\paragraph*{Limitations}
Despite these results, several limitations should be acknowledged. (i) \textbf{Dataset scope:} evaluations are conducted on a bespoke testbed with a fixed KPI set and operating modes; generalization to other vendors, frequency bands, carrier bandwidths, or mobility regimes (e.g., dense handovers) remains to be validated. (ii) \textbf{Forecasting horizon:} the current study focuses on strict one-step-ahead prediction; many O\!-RAN decisions benefit from multi-horizon trajectories and temporal quantification of risk. (iii) \textbf{Uncertainty and robustness:} the reported model is point-predictive; principled uncertainty quantification, calibration, and robustness to concept drift, outliers, and extended missingness have not been exhaustively characterized. (iv) \textbf{Closed-loop impact:} we assess open-loop forecasting accuracy and latency; end-to-end effects on closed-loop RIC policies (xApps/rApps interacting via A1/E2/O1) and negotiation among agents under prediction errors are not measured here. (v) \textbf{Hardware/implementation:} while compact, our implementation targets general-purpose hardware; co-design with accelerators, quantization, and memory-aware kernels may change relative latencies across baselines. (vi) \textbf{Model scope:} MS$^{3}$M is KPI-agnostic but channel-independent in its depthwise SSM core; stronger cross-UE or topology-aware interactions (e.g., inter-cell interference, mobility graphs) are not explicitly modeled.

\paragraph*{Future Studies}
We see several promising directions: 
\begin{itemize}[leftmargin=*,nosep]
  \item \textbf{Multi-horizon \& probabilistic forecasting:} extend MS$^{3}$M with distributional heads (e.g., mixture/log-likelihood training), conformal prediction for finite-sample coverage, and trajectory rollouts for mid/long horizons.
  \item \textbf{Online adaptation \& drift handling:} incorporate change-point detection, test-time adaptation, and continual learning to track dynamics under evolving traffic, interference, or configuration updates.
  \item \textbf{Hybrid mixers:} explore SSM–attention hybrids (selective cross-channel/self-attention atop SSM backbones) and per-channel/time adaptive step sizes to better capture abrupt regime shifts.
  \item \textbf{Graph- and physics-aware structure:} inject cross-UE/cell relations (e.g., handover graphs, PRB contention) and lightweight domain constraints to improve extrapolation and interpretability.
  \item \textbf{Federated/edge training:} evaluate privacy-preserving learning across distributed RAN sites with heterogeneous data and bandwidth constraints.
  \item \textbf{Closed-loop RIC evaluation:} integrate forecasts into real xApps (e.g., link adaptation, PRB scheduling, mobility robustness) and quantify end-to-end gains, stability margins, and negotiation outcomes under uncertainty.
  \item \textbf{Broader benchmarks:} validate on multi-vendor datasets and public time-series suites, with standardized leakage-safe splits and ablations over kernel length, state size, and mixture count.
\end{itemize}
In summary, MS$^{3}$M advances the feasibility of control-grade forecasting in Near-RT RICs by pairing stability-aware state-space mixing with modern gating and compact channel mixing. We hope the released code and pipeline will catalyze rigorous, leakage-safe comparisons and accelerate the deployment of prediction-aware, agentic O\!-RAN.

%\section*{Acknowledgments}
%Omitted for double\mbox{-}blind review.

\bibliographystyle{IEEEtran}
\bibliography{main}

\end{document}